\newtheorem{proposition}{Proposition}
\newtheorem{lemma}{Lemma}
\newtheorem{theorem}{Theorem}
\newtheorem{definition}{Definition}
\newtheorem{assumption}{Assumption}
\theoremstyle{remark}\newtheorem{remark}{Remark}
\title{\LARGE \bf
	Data-Driven Resilient  Predictive Control under Denial-of-Service
}
\author{Wenjie~Liu,~Jian~Sun,~\IEEEmembership{Senior Member,~IEEE}, Gang Wang,~\IEEEmembership{Member,~IEEE},\\Francesco Bullo,~\IEEEmembership{Fellow,~IEEE},~and
	Jie Chen,~\IEEEmembership{Fellow,~IEEE}
	\thanks{This work was supported in part by the National Key R\&D Program of China under Grant 2018YFB1700100, and the National Natural Science Foundation of China under Grants 61925303, 62088101, U20B2073, 61720106011, 62173034. 
	}
	\thanks{W. Liu and G. Wang are with the State Key Lab of Intelligent Control and Decision of Complex Systems and the School of Automation, Beijing Institute of Technology, Beijing 100081, China (e-mail: liuwenjie@bit.edu.cn; gangwang@bit.edu.cn).
		
		J. Sun is with the State Key Lab of Intelligent Control and Decision of Complex Systems and the School of Automation, Beijing Institute of Technology, Beijing 100081, China, and the Beijing Institute of Technology Chongqing Innovation Center, Chongqing 401120, China (e-mail: sunjian@bit.edu.cn).
		
		F. Bullo is with the Mechanical Engineering Department and the Center of Control, Dynamical Systems and Computation, UC Santa Barbara, CA 93106-5070, USA (e-mail: bullo@ucsb.edu).
		
		J. Chen is with the Department of Control Science and Engineering, Tongji University, Shanghai 201804, China, and also with the State Key Lab of Intelligent Control and Decision of Complex Systems and the School of Automation, Beijing Institute of Technology, Beijing 100081, China 	
		(e-mail: chenjie@bit.edu.cn).
	}
}
\begin{document}
	\maketitle

	\begin{abstract}
		The study of resilient control of linear time-invariant (LTI) systems
		against denial-of-service (DoS) attacks is gaining popularity in emerging
		cyber-physical applications.  In previous works, explicit system models
		are required to design a predictor-based resilient controller.  These
		models can be either given \emph{a priori} or obtained through a prior
		system identification step.  Recent research efforts have focused on
		data-driven control based on pre-collected input-output trajectories
		(i.e., without explicit system models).  In this paper, we take an
		initial step toward data-driven stabilization of stochastic LTI systems
		under DoS attacks, and develop a resilient model predictive control (MPC)
		scheme driven purely by data-dependent conditions.  The proposed
		data-driven control method achieves the same level of resilience as the
		model-based control method. For example, local input-to-state
		stability (ISS) is achieved under mild assumptions on the noise and the
		DoS attacks.  To recover global ISS, two modifications are further
		suggested at the price of reduced resilience against DoS attacks or
		increased computational complexity.  Finally, a numerical example is
		given to validate the effectiveness of the proposed control method.

	\end{abstract}
	\begin{keywords} Denial-of-service attack, data-driven control, model predictive control, input-to-state stability.
	\end{keywords}
	
	\section{Introduction}\label{sec:intro}
	Thanks to recent advances in computing and networking technologies, recent
	years have witnessed rapid developments in cyber-physical systems (CPSs),
	e.g., \cite{SminSecure,2008Cyber,Bullo2018,tac2020wwsc,ren2020awide}.
	Nonetheless, it has been reported that such systems are often vulnerable to
	cyber-attacks \cite{2013Attack}, including false-data injection attacks
	\cite{FP-RC-FB:10p ,wu2019Switching}, replay attacks
	\cite{Zhu2017replay,replay2018}, and denial-of-service (DoS) attacks
	\cite{Cetinkaya2019overview}.  For instance, on February 8,
	2020, the telecommunication network of Iran suffered from DoS attacks for
	about an hour \cite{iran2020ddos}.  As a consequence, $25\%$ of the
	national internet connection dropped, leading to severe damage of critical
	infrastructure as well as significant economic loss.  In general, DoS
	attacks require little knowledge about the system and are therefore easy to
	be implemented.  Moreover, DoS attacks are destructive.  
	If an unstable open-loop
		process adopts a remote controller, then a long duration of DoS
	may render irreparable damages on physical systems as well as associated
	components.  These observations motivate the need for effective mechanisms
	to defend against DoS attacks and/or to mitigate the associated effect on
	physical entities.
	
	
	To this aim, a possible remedy is to design control strategies such that
	satisfied performance can be maintained regardless of the DoS attack
	strategies, which is referred to as {resilient control}.  This problem was
	first addressed by the work \cite{PersisInput}, in which a transmission
	policy along with some conditions on DoS attacks were developed for
	resilient stabilization of LTI systems.  Since then, a multitude of
	publications have studied resilient control of different systems,
	including, e.g., systems using an output-feedback controller in
	\cite{FengResilient,8880482,Liu2021resilient}, multi-agent systems in
	\cite{LuInput}, and nonlinear systems in \cite{Persis2016Networked}.
	
	It is worth emphasizing that all the aforementioned resilient control
	strategies build upon the model-based control approach developed in
	\cite{2003Model}.  In other words, they require an explicit system model,
	or need to perform a system identification step \emph{a
		priori}. Nonetheless, accurate system models may be challenging to
	acquire in real-world applications, while system identification of
	large-scale systems could be data hungry and/or computationally cumbersome.
	Data-driven methods, on the other hand, offer a new avenue for the control
	of unknown dynamic systems and pursue the design of a control law directly
	from data (i.e., from pre-collected input-output trajectories), {e.g., \cite{ren2021adata}.
		This method} also comes with rigorous theoretical guarantees thanks
	to the celebrated \emph{Fundamental Lemma} \cite{willems2005note}.  Indeed,
	there has been a recent line of research works exploring the fundamental
	lemma for data-driven control of unknown LTI systems.  Following the
	state-space description perspective, a parameterized model of linear
	feedback systems was developed such that state-feedback controllers can be
	directly obtained from solving data-dependent linear matrix inequalities in
	\cite{persis2020data,berberich2020robust,rueda2021data,zhang2019data,Talebi2020Online}.  Due to the popularity of model
	predictive controllers in industrial applications (e.g., see
	\cite{Mayne2000Constrained,Limon2008MPC,rawlings2019model}),
	another line of work dealt with data-driven MPC schemes (e.g., \cite{Coulson2019data,Coulson2021Bridging}).  
	Stability analysis and robustness guarantees of data-driven MPC against measurement
	noise have recently been studied in \cite{berberich2019data}.

	The goal of this present paper is to stabilize unknown LTI systems under
	DoS attacks based solely on input-output system trajectories acquired
	\emph{a priori} through some off-line experiments.  It has been shown in
	\cite{FengResilient} that LTI systems equipped with a predictor-based
	controller based on an explicit system model can achieve maximum resilience
	against DoS attacks. Specifically, an observer-based predictor is employed
	to maintain a reasonable estimate of the state even in the presence of DoS
	attacks; a feedback controller is then developed based upon this estimate.
	Nevertheless, both designs of the observer and of the feedback controller
	require an explicit system model, rendering such control methods difficult
	to apply when only input-output trajectories are available.  To overcome
	this challenge, a data-driven resilient MPC scheme is developed in this
	paper.  Resembling the data-driven MPC scheme presented in
	\cite{berberich2019data}, our resilient data-driven MPC generates a
	sequence of control inputs and associated predicted outputs by solving a
	convex optimization problem with purely data-dependent constraints, but
	only at time instants when new measurements are received (i.e., no DoS
	attack occurs).  On the other hand, at time instants when there is a DoS
	attack, the corresponding measurement transmission fails.  Our scheme takes
	the predicted control input from the most recent solution of the
	optimization problem, or simply a zero control input, depending upon
	whether the duration of DoS attacks exceeds the prediction horizon of the
	data-driven MPC.
	
	It is worth pointing out that there are a couple of notable differences
	between the data-driven MPC scheme in \cite{berberich2019data} and the one
	developed here.  To start with, in addition to DoS attacks, both process
	noise and measurement noise are modeled and accounted for in this paper,
	whereas only measurement noise was considered in \cite{berberich2019data}.
	Since both pre-collected measurements as well as the measurements received
	in the execution phase are corrupted by noise in \cite{berberich2019data},
	the resultant data-driven MPC involves solving a non-convex constrained
	optimization problem which can be computationally challenging (NP-hard in
	general).  In contrast, we consider only process noise in the
	data-collecting phase, while addressing both process and network-induced
	measurement noise in the execution phase.  Interestingly, this noise
	modeling approach leads to a convex data-driven MPC scheme in our paper,
	that is computationally appealing.  Although a convex data-driven MPC
	scheme was also developed in \cite{Coulson2019data}, no stability analysis
	and robustness guarantees were provided.  To the best of our knowledge,
	this is the first model-free (a.k.a. data-driven) controller that achieves
	the maximum resilience performance against DoS attacks as the model-based
	controller in \cite{FengResilient}.

	In a nutshell, the main contribution of the present work is threefold, summarized as follows.
	\begin{itemize}
		\item[\textbf{c1)}] Addressing both process and measurement noise, a
		data-driven MPC scheme is proposed for the control of LTI systems. Our
		scheme generates a series of control inputs and associated predicted
		outputs through solving a convex optimization problem driven by purely
		input-output data;
		\item[\textbf{c2)}] Leveraging the idea behind the predictor-based
		controller \cite{FengResilient}, the proposed data-driven MPC scheme is
		further modified to accommodate DoS attacks, thereby becoming a
		data-driven resilient controller;
		and,
		\item[\textbf{c3)}] Under standard conditions on DoS attacks and
		noise, local input-to-state stability (ISS) of the
		closed-loop system is established for the proposed data-driven
		resilient MPC, along with modifications to recover global ISS.
	\end{itemize}
	%
	\emph{Notation:}  Denote the set of real numbers by $\mathbb{R}$, and the set of integers by $\mathbb{Z}$. Given $\alpha \in \mathbb{R}$ or $\alpha \in \mathbb{Z}$, let $\mathbb{R}_{>\alpha} (\mathbb{R}_{\ge \alpha})$ or $\mathbb{Z}_{>\alpha} (\mathbb{Z}_{\ge \alpha})$ denote the set of real numbers or integers greater than (greater than or equal to) $\alpha$.
	Let $\mathbb{N}$ denote the set of natural numbers and define $\mathbb{N}_0 := \mathbb{N} \cup \{0\}$.
	For a matrix $M$, its left pseudo-inverse and right pseudo-inverse are denoted by $M^\dag$ and $M^\ddag$, respectively.
	Given a vector $x\in \mathbb{R}^{n_x}$, $\Vert x\Vert$ is its Euclidean norm, and for a positive definite matrix $P = P^T \succ 0$, define the weighted norm $\Vert x \Vert_P = \sqrt{x^\top P x}$.
	Let further $\Vert M\Vert$ be the spectral norm of matrix $M$.
	For discrete-time signals, define $\Vert w\Vert_{[t_1, t_2]} := \max_{t \in [t_1, t_2]}\{\Vert w_t\Vert\}$, and $\Vert w\Vert_{\infty} := \Vert w\Vert_{[0, \infty)}$
	Let $\underline{\lambda}_{P}$ ($\overline{\lambda}_{P}$) represent the minimum (maximum) eigenvalue of matrix $P$.
	Let $I$ represent the identity matrix with suitable dimensions, with $I_q$ denoting its $q$-th row, and $I_{[q,q+i]} = [I_q^\top,I_{q + 1}^\top,\cdots\!,I_{q +i}^\top]^\top$. 
	For $\delta \in \mathbb{R}_{>0}$, let $\mathbb{B}_{\delta} = \{x \in \mathbb{R}^{n_x}~|~\Vert x\Vert \le \delta \}$.
	Let $[t_1, t_2]$ denote the time interval from $t_1$ to $t_2$ with discrete time instants. 
	
	The Hankel matrix associated to the sequence $x_N := \{x_t\}_{t = 0}^{N - 1}$, is denoted by
	\begin{equation}
	H_{L}(x_N):=\left[
	\begin{matrix}
	x_{0} & x_{1} & \ldots & x_{N-L} \\
	x_{1} & x_{2} & \ldots & x_{N-L+1} \\
	\vdots & \vdots & \ddots & \vdots \\
	x_{L-1} & x_{L} & \ldots & x_{N-1}
	\end{matrix}
	\right].
	\end{equation}
	A stacked window of the sequence is given by
	\begin{equation}
	x_{[t_1, t_2]}=\left[
	\begin{matrix}
	x_{t_1} \\
	\vdots \\
	x_{t_2}
	\end{matrix}
	\right].
	\end{equation}
	For notational convenience, we will use $x$ to denote both the sequence itself and the stacked vector $x_{[0, N - 1]}$ containing all of its components, whenever clear from the context.
	\begin{figure}
		\centering
		\includegraphics[width=7cm]{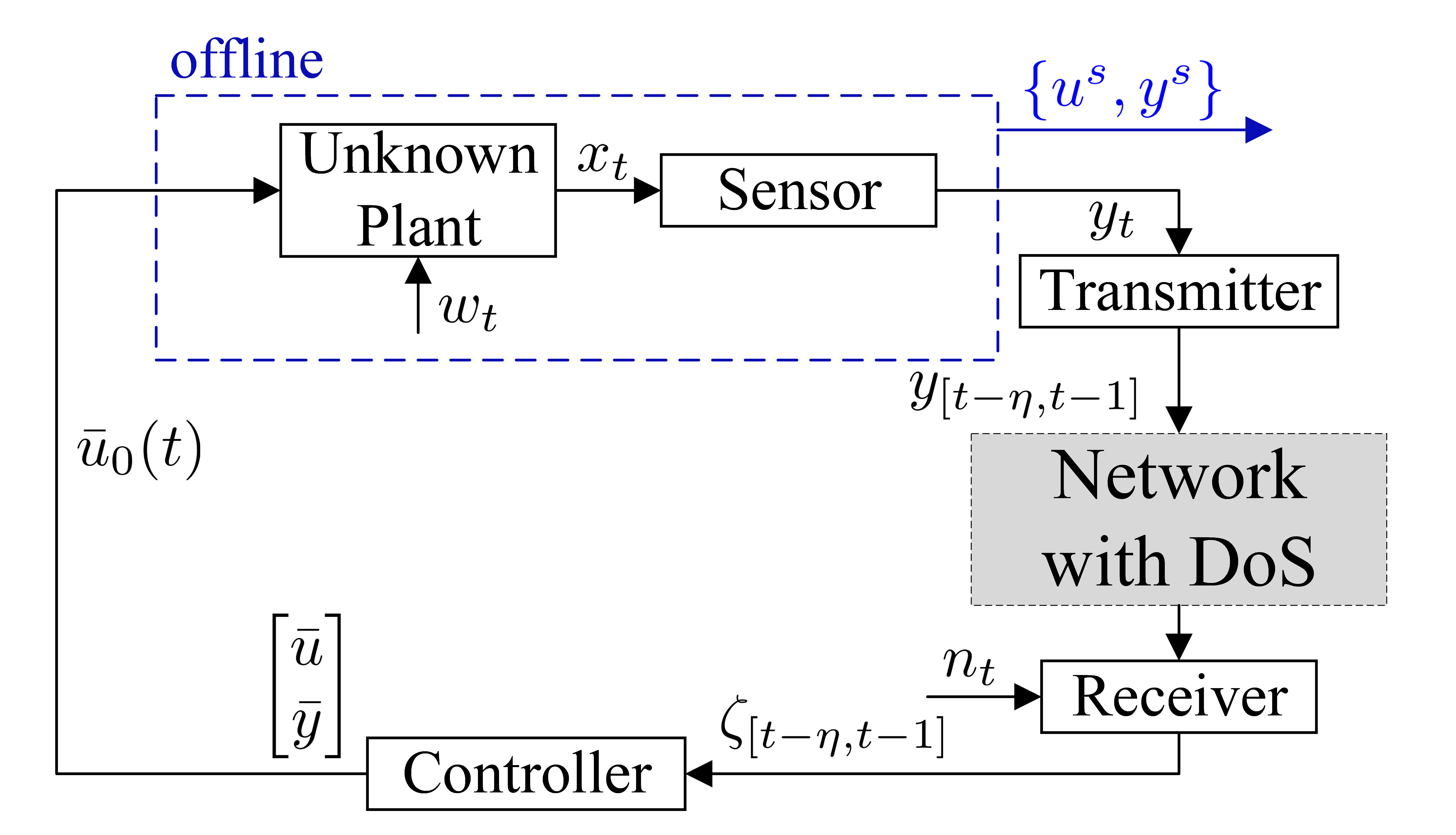}\\
		\caption{Closed-loop system using data-driven MPC.}\label{fig:system}
		\centering
	\end{figure}
	\section{Preliminaries and Problem Formulation}\label{sec:problemformulation}
	\subsection{Networked control systems}
	Consider the stochastic discrete-time linear time-invariant (LTI) system
	\begin{subequations}\label{eq:sys}
		\begin{align}
		x_{t+1} &= A x_{t} + B u_{t} + w_{t}\\
		y_{t} &= Cx_{t} + D u_{t}
		\end{align}
	\end{subequations}
	where $x_t \in \mathbb{R}^{n_x}$, $u_t \in \mathbb{R}^{n_u}$, $y_t \in \mathbb{R}^{n_y}$, and $w_t \in \mathbb{R}^{n_x}$ are the state, control input, output, and noise of the plant, respectively.
	The initial state $x_0$ is arbitrary.
	We make the following assumptions on system \eqref{eq:sys}.
	\begin{assumption}[\emph{Stabilizability and observability}]\label{as:ctrl}
		The pair $(A, B)$ is stabilizable, and the pair $(C, A)$ is observable.
	\end{assumption}
	\begin{remark}
		The observability index of the system \eqref{eq:sys} is denoted by $\eta$, i.e., $\eta := \min_{1 \le i\le n_x} \{{\rm rank}[C^\top, (CA)^\top, \cdots, $ $(CA^i)^\top]^\top = n_x\}$.
	\end{remark}
	\begin{assumption}[\emph{Unknown system model}]\label{as:sys}
		The system matrices $(A, B, C, D)$ in \eqref{eq:sys} are unknown, and only some input-output trajectories, i.e., $\{u^s_t, y^s_t\}_{t = 0}^{N - 1}$, obtained by some offline experiments are available.
	\end{assumption}
	
	At each time instant $t$, an output packet containing the past $\eta$ output measurements, i.e., $y_{[t - \eta, t - 1]}$, is sent to the remote controller through a communication channel subject to DoS attacks.
	As a result, not all packets can be received at the controller side, and details about this attack will be presented in the next section.
	Moreover, we consider that during transmission, the output is corrupted by an additive network-induced noise $n_t$, i.e.,
	\begin{equation}
	\zeta_t = y_t + n_t.
	\end{equation}
	\begin{assumption}[Noise bound]
			\label{as:noisebound}
			The process noise $w_t$ and the network-induced noise $n_t$ are bounded by a known constant $\bar{v} := \max_t\{\Vert w_t\Vert, \Vert n_t\Vert\}$ for all $t \in \mathbb{N}_0$.
	\end{assumption}
	On the other hand, the input $u_t$ generated by the controller goes through an ideal channel, i.e., the controller-to-plant channel is not affected by DoS attacks.
	See Fig. \ref{fig:system} for the depicted networked control system architecture.
	
	Due to the presence of noise $w_t$ and $n_t$, rather than the asymptotic stability, a weaker notion of stability, referred to as  input-to-state stability (ISS), is explored in this paper.
	The definition of ISS adapted from \cite[Definition 2.2]{Z1994Small} is presented as follows.
	\begin{definition}[\!\!{\emph{\cite[Definition 2.2]{Z1994Small}}}]
		\label{def:iss}
		The system \eqref{eq:sys} achieves input-to-state stability (ISS) if its solution satisfies
		\begin{equation}\label{eq:isps}
		\Vert x_t\Vert \le \beta\left(\Vert x_0\Vert, t\right) + \alpha\left(\Vert v\Vert_{\infty}\right),\quad \forall t \in \mathbb{N}_{\ge 0}
		\end{equation}
		where $\alpha$ is a function of class $\mathcal{K}_{\infty}$ \footnote{A function $\alpha : [0,\infty) \rightarrow [0, \infty)$ is said to be of class $\mathcal{K}$ if it is continuous, strictly increasing, and $\alpha(0) = 0$.
			A function $\alpha : [0,\infty) \rightarrow [0, \infty)$ is said to be of class $\mathcal{K}_{\infty}$ if it is of class $\mathcal{K}$ and also unbounded.}, and $\beta$ is a function of class $\mathcal{KL}$ \footnote{A function $\beta : [0,\infty)\times [0,\infty) \rightarrow [0, \infty)$ is said to be of class $\mathcal{KL}$ if $\beta(\cdot, t)$ is of class $\mathcal{K}$ for each fixed $t \ge 0$ and $\beta(s, t)$ decreases to $0$ as $t \rightarrow \infty$ for each fixed $s \ge 0$.}.
		If (\ref{eq:isps}) holds with $v = 0$, then the system is said to be globally asymptotically stable (GAS).
	\end{definition}
	
	Finally, we review the so-called uniform input-output-to-state stability
	(UIOSS) of a system introduced in \cite{CAI2008326}.
	\begin{proposition}[\emph{Uniform input-output-to-state stability}]
			\label{pro:uioss}
			If the system \eqref{eq:sys} is observable, there exist constants $c_1,
			c_2, c_3 > 0$, and $\epsilon \in (0, 1)$, such that for every $x \in
			\mathbb{R}^{n_x}$, $u \in \mathbb{R}^{n_u}$, and $w \in
			\mathbb{B}_{\bar{v}}$, the system is UIOSS with the entire
			trajectory obeying
			\begin{equation}\label{eq:uiossx}
			\Vert x_t\Vert \!\le\! c_1 \epsilon^{t + 1}\Vert x_0\Vert +c_2 \Vert u\Vert_{[0, j - 1]} + c_3 \Vert y\Vert_{[0, j - 1]}
			,~ \forall t\!\in\! \mathbb{N}_0
			\end{equation}
			for all $t\in\mathbb{N}_0$.
			In addition, there exists a matrix $P = P^T \succ 0$ such that for the UIOSS-Lyapunov function $V = x^T P x$, the following inequality holds for some constants $\sigma_1,\,\sigma_2,\,\sigma_3\!>\!0$
			\begin{equation}\label{eq:uiossV}
			V(x_{t + 1}) -V(x_t) \le -\sigma_1\Vert x _t\Vert^2 + \sigma_2\Vert u_t \Vert^2 + \sigma_2\Vert y_t \Vert^2
			\end{equation}
			for all $t\in\mathbb{N}_0$.
	\end{proposition}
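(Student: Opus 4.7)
The two claims stem from a single construction rooted in observability, so I would establish them in tandem. Since $(C,A)$ is observable it is in particular detectable, so one may select a gain $L$ such that $\tilde{A} := A + LC$ is Schur. Substituting $-Ly_t + LDu_t$ for $-LCx_t$ via the output equation recasts the dynamics as
\[
x_{t+1} = \tilde{A}\, x_t + (B+LD)u_t - L y_t + w_t,
\]
which is the workhorse for the remainder. Since $\tilde{A}$ is Schur, the discrete Lyapunov equation $\tilde{A}^{\top} P \tilde{A} - P = -Q$ admits a solution $P=P^{\top}\succ 0$ for any prescribed $Q\succ 0$, and $V(x) := x^{\top} P x$ serves as the UIOSS-Lyapunov candidate.

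For the Lyapunov inequality \eqref{eq:uiossV}, I would expand $V(x_{t+1}) - V(x_t)$ using this rewrite. The expansion produces $-x_t^{\top} Q x_t$, cross terms pairing $x_t$ with the ``perturbation'' $\bar{B} u_t - L y_t + w_t$ (with $\bar{B} := B + LD$), and a quadratic in that perturbation. Applying Young's inequality with a small parameter to each cross term absorbs the cross contribution into a fraction of $\underline{\lambda}_{Q}\Vert x_t\Vert^2$ at the cost of enlarged quadratic terms in $u_t$, $y_t$, and $w_t$. Invoking Assumption~\ref{as:noisebound} to bound $\Vert w_t\Vert^2 \le \bar{v}^2$ then collapses the noise contribution into an $\mathcal{O}(\bar{v}^2)$ constant (which, in the spirit of~\cite{CAI2008326}, is folded into the ISS disturbance in the downstream analysis), leaving the claimed form with $\sigma_1,\sigma_2,\sigma_3>0$.

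For the trajectory bound \eqref{eq:uiossx}, I would use $\underline{\lambda}_{P}\Vert x\Vert^2 \le V(x) \le \overline{\lambda}_{P}\Vert x\Vert^2$ to convert the Lyapunov inequality into $V(x_{t+1}) \le \rho V(x_t) + \sigma_2\Vert u_t\Vert^2 + \sigma_3\Vert y_t\Vert^2$ with $\rho := 1 - \sigma_1/\overline{\lambda}_{P} \in (0,1)$. Iterating backwards and summing the resulting geometric series in $\rho$ yields
\[
V(x_t) \le \rho^{t} V(x_0) + \frac{\sigma_2}{1-\rho}\Vert u\Vert_{[0,t-1]}^2 + \frac{\sigma_3}{1-\rho}\Vert y\Vert_{[0,t-1]}^2.
\]
Taking square roots, using $\sqrt{a+b+c}\le \sqrt{a}+\sqrt{b}+\sqrt{c}$, and setting $\epsilon := \sqrt{\rho}$ delivers \eqref{eq:uiossx} with explicit $c_1,c_2,c_3$ expressible in $P$, the $\sigma_i$, and $\rho$.

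The principal technical delicacy lies in handling $w_t$: the stated form of \eqref{eq:uiossV} carries no explicit noise term, so one must verify that Young's inequality still leaves $\sigma_1>0$ after the noise contribution has been absorbed via the uniform bound $\bar{v}$, and that the same constants propagate cleanly through the iteration into \eqref{eq:uiossx}. Once this bookkeeping is settled, the remaining steps are standard discrete-time Lyapunov manipulations.
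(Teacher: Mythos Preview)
The paper does not supply its own proof of this proposition: it is stated as a preliminary result ``introduced in \cite{CAI2008326}'' and is invoked without further justification. So there is nothing to compare against beyond the reference to Cai and Teel.

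Your approach is the standard one and is correct in substance. Output injection via a detectability gain $L$ making $\tilde{A}=A+LC$ Schur, the Lyapunov equation for $\tilde{A}$, Young's inequality to split the cross terms, and the geometric iteration to pass from the dissipation inequality to the trajectory bound are exactly the ingredients behind the general UIOSS results of \cite{CAI2008326} specialized to the linear case.

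You have also put your finger on the one genuine wrinkle: the displayed inequality \eqref{eq:uiossV} carries no explicit $w_t$ term, yet the expansion of $V(x_{t+1})-V(x_t)$ unavoidably produces $w_t^{\top}P w_t$ and cross terms with $w_t$. These cannot be absorbed into $-\sigma_1\Vert x_t\Vert^2$ without residue; under Assumption~\ref{as:noisebound} what one actually obtains is
\[
V(x_{t+1})-V(x_t)\le -\sigma_1\Vert x_t\Vert^2+\sigma_2\Vert u_t\Vert^2+\sigma_3\Vert y_t\Vert^2+\sigma_4\bar{v}^2
\]
for some $\sigma_4>0$. The paper's later use of \eqref{eq:uiossV} (in the proof of Theorem~\ref{thm:isps}) ultimately lumps all noise effects into $\mathcal{K}_\infty$-functions of $\bar{v}$, so the missing $\bar{v}^2$ term is harmless downstream; but as a standalone statement, \eqref{eq:uiossV} is strictly speaking only true modulo such an additive constant. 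Your treatment of this point is appropriate.
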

	\subsection{Denial-of-Service attack}\label{sec:dos}
	Denial-of-Service  (DoS) attacks are attacks that are launched by malicious routers and jammers to block communication channels, which can lead to data packet losses, and are thus destructive.
	Evidently for an open-loop unstable plant, a long-duration of DoS can degrade the closed-loop system performance and even incur instability and divergence. 
	Therefore, defensive structure should be judiciously designed and equipped with the system to achieve stability in the presence of as well as resilience against DoS attacks.
	
	To rigorously evaluate the effectiveness of a defense method, a mathematical model characterizing DoS attacks should be introduced.
	Several models have been studied, and they can be roughly summarized as stochastic model and deterministic model. 
		For instance, Bernoulli processes \cite{yang2019mpc} and Markov processes \cite{Befekadu2015risk} are often used to model DoS attacks. 
		Another line along the stochastic model is using game-theoretic approaches to designing the attack strategy as well as the defense strategy simultaneously \cite{li2017sinr}.
		However, it is difficult to justify the incentive of an attacker using the stochastic model.
	To this aim, we call for a general deterministic model capitalizing on the DoS duration and the DoS frequency, that was initially presented in \cite{PersisInput}.
	For each $t \in \mathbb{N}_{0}$, upon introducing the following DoS indicator 
	\begin{equation}
	\ell_t := 
	\left\{
	\begin{array}{ll}
	0,& {\text{no DoS attack happens at}}~t\\
	1,&{\text{a DoS attack happens at}}~t
	\end{array}
	\right.,
	\end{equation}
	the DoS duration during the time interval $[t_1, t_2)$ is defined by $\Phi_d(t_1, t_2) = \sum_{i = t_1}^{t_2 - 1} \ell_i$.
	In addition, defining for each $t \in \mathbb{N}$
	\begin{equation}
	d_t := \left\{
	\begin{array}{ll}
	1, &\ell_{t} = 1~ {\text{and}}~\ell_{t-1} = 0\\
	0, &{\text{otherwise}}
	\end{array}
	\right.,
	\end{equation}
	then the DoS frequency during interval $[t_1, t_2)$ is expressed by $\Phi_f(t_1, t_2) = \sum_{i = t_1}^{t_2 - 1} d_i$.
	
	It is self-evident from the definitions that, if both the DoS duration and the DoS frequency are considerably sizable to prevent all packets from being transmitted, no meaningful control input can be constructed to stabilize the plant.
	As a consequence, some assumptions on the DoS frequency and DoS duration should be made for practical investigation purpose of system stability in the presence of DoS attacks.
	\begin{assumption}[\emph{DoS frequency}]\label{as:dosfre}
		There exist constants $\kappa_f \in \mathbb{R}_{\ge 0}$, and $\nu_f \in
		\mathbb{R}_{\ge 2}$, also known as chatter bound and average dwell-time,
		respectively, such that the DoS frequency satisfies
		\begin{equation}\label{eq:dosfre}
		\Phi_f(t_1, t_2) \le \kappa_f + \frac{t_2 - t_1}{\nu_f}
		\end{equation}
		over every time interval $[t_1, t_2)$, where $t_1\le t_2 \in \mathbb{N}_{0}$.
	\end{assumption}
	
	\begin{assumption}[\emph{DoS duration}]\label{as:dosdur}
		There exist constants $\kappa_d \in \mathbb{R}_{\ge 0}$, and $\nu_d \in
		\mathbb{R}_{\ge 1}$, also known as chatter bound and average duration
		ratio, respectively, such that the DoS duration satisfies
		\begin{equation}\label{eq:dosdur}
		\Phi_d(t_1, t_2) \le \kappa_d + \frac{t_2 - t_1}{\nu_d}
		\end{equation}
		over every time interval $[t_1, t_2)$, where $t_1\le t_2 \in \mathbb{N}_{0}$.
	\end{assumption}
	
	In fact, the original version of modeling DoS attacks in terms of DoS
	frequency and duration was established in \cite{PersisInput} to address the
	stabilization problem of a continuous-time system under DoS attacks. These
	notions have been popularly used in the literature; see e.g.,
	\cite{FengResilient,LuInput,FengNetworked,PersisInput,Persis2016Networked}.
	When studying discrete-time systems, a discrete-time DoS model that is
	similar to the present one, was discussed in \cite{8880482}.
	In their work, LTI systems without noise were considered, and thus weaker
	assumptions were used.
	To be specific, they constrained the DoS frequency and duration only on the interval $[0, t),
	t\in \mathbb{N}_0$, rather than on every sub-interval $[t_1, t_2)$
	of $[0, t)$.
	
	%
	\begin{remark}[\emph{Implications of DoS parameters}]
		Taking a switching systems perspective, the quantity $\nu_f$ in Assumption \ref{as:dosfre} can be interpreted as the average dwell-time \cite{Hespanha1999STABILITY} between two consecutive DoS attacks \emph{off/on} switches over the time interval $[t_1, t_2)$.
		On the other hand, Assumption \ref{as:dosdur} requires that the average duration of DoS attacks  not exceed a fraction $1/\nu_d$ of the entire interval.
		The constants $\kappa_f$ and $\kappa_d$ are additional regularization parameters that can be chosen to attain tighter bounds on the DoS duration and on the DoS frequency, respectively.
		To see this, suppose a DoS attack happens at time $t_1$. Then, it is obvious for the time interval $[t_1, t_1 + 1)$, that $\Phi_f(t_1, t_2) = 1$, and $t_2 - t_1 = 1$.
		By virtue of the fact that $\nu_f \in \mathbb{R}_{\ge 2}$, we deduce that $(t_1 + 1 - t_1)/\nu_f \le 1/2 < 1$, which suggests that a constant $\kappa_f \ge 1$ is required to ensure that \eqref{eq:dosfre} holds true for every time interval.
		The same reasoning can be conducted for $\kappa_d$.
	\end{remark}
	As a direct implication of Assumptions \ref{as:dosfre} and \ref{as:dosdur}, it can be deduced that the number of time steps between two successful transmissions is upper bounded.
	Let $\{s_r\}_{r\in \mathbb{N}_0}$ collect the successful transmission time instants, at which $\ell_{s_r} = 0$.
	\begin{lemma}[\!\!
		{\cite[Lemma 3]{FengResilient}}]
		\label{lem:dos}
		Suppose that the DoS attacks satisfy Assumptions \ref{as:dosfre} and \ref{as:dosdur} with
		\begin{equation}\label{eq:doscondition}
		\frac{1}{\nu_f} + \frac{1}{\nu_d} < 1.
		\end{equation}
		Then it holds that $s_0 \le T - 1$, and $s_{r + 1} - s_{r} \le T$ for all $r\in \mathbb{N}_0$ with
		\begin{equation}\label{eq:Ts}
		T := (\kappa_d + \kappa_f)\Big(1 - \frac{1}{\nu_d} - \frac{1}{\nu_f}\Big)^{-1} + 1.
		\end{equation} 
	\end{lemma}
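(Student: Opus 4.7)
The plan is a short contradiction-style calculation applied to a single gap $[s_r, s_{r+1})$ between consecutive successful transmissions, with the crucial trick being to add the two DoS bounds rather than use them separately. First, I would fix any $r \in \mathbb{N}_0$, set $\Delta := s_{r+1} - s_r$, and observe from the definition of $\{s_r\}_{r\in\mathbb{N}_0}$ that every time step in the open interval $(s_r, s_{r+1})$ must carry $\ell_t = 1$: otherwise, such a step would itself be a successful transmission strictly between $s_r$ and $s_{r+1}$, contradicting their consecutiveness. Combined with $\ell_{s_r} = 0$, this pins down both indicator sums exactly, namely $\Phi_d(s_r, s_{r+1}) = \Delta - 1$ and $\Phi_f(s_r, s_{r+1}) = 1$, the lone off-to-on switch occurring at time $s_r + 1$.

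Next, I would add the inequality from Assumption~\ref{as:dosfre} to that from Assumption~\ref{as:dosdur}, both applied on $[s_r, s_{r+1})$, obtaining
\begin{equation}
\Delta \;=\; \Phi_d + \Phi_f \;\le\; (\kappa_d + \kappa_f) + \Delta\left(\tfrac{1}{\nu_d} + \tfrac{1}{\nu_f}\right).
\end{equation}
Under \eqref{eq:doscondition}, the coefficient $1 - 1/\nu_d - 1/\nu_f$ is strictly positive, so the previous inequality rearranges to $\Delta \le (\kappa_d + \kappa_f)\bigl(1 - 1/\nu_d - 1/\nu_f\bigr)^{-1} = T - 1$, which in particular gives $s_{r+1} - s_r \le T$ with $T$ as in \eqref{eq:Ts}.

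For the initial-time bound $s_0 \le T - 1$, the same reasoning is applied to the prefix $[0, s_0)$: if $\ell_0 = 0$ then $s_0 = 0$ and the bound is immediate, while if $\ell_0 = 1$ one treats the pre-start instant as a ``virtual'' successful transmission so that the onset of the first DoS interval contributes one off-to-on switch. Once this convention is fixed, the same sum-of-assumptions cancellation produces $s_0 \le T - 1$.

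The only minor obstacle I anticipate is the boundary convention in the very last step, since $d_t$ in the paper is defined only for $t \in \mathbb{N}$ and so the first switch at $t = 0$ needs a careful reading; this is bookkeeping, not a conceptual difficulty. There is no deeper obstruction because the entire proof hinges on the two \emph{exact} equalities $\Phi_d = \Delta - 1$ and $\Phi_f = 1$, after which a single addition of Assumptions~\ref{as:dosfre}--\ref{as:dosdur} collapses both $\nu_d^{-1}$ and $\nu_f^{-1}$ into the common denominator appearing in the definition of $T$.
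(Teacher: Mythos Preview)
Your argument is correct and is the standard one; the paper itself does not supply a proof but simply cites \cite[Lemma~3]{FengResilient}, so there is nothing substantive to compare against. One minor point: for the bound on $s_0$ you do not actually need the ``virtual'' convention on $d_0$. If $\ell_0 = 1$ and every step in $[0,s_0)$ is jammed, then Assumption~\ref{as:dosdur} alone already gives $s_0 = \Phi_d(0,s_0) \le \kappa_d + s_0/\nu_d$, hence $s_0 \le \kappa_d(1-1/\nu_d)^{-1}$, and it is straightforward to check that $\kappa_d(1-1/\nu_d)^{-1} \le (\kappa_d+\kappa_f)(1-1/\nu_d-1/\nu_f)^{-1} = T-1$ whenever $\kappa_f \ge 0$ and $1/\nu_f > 0$. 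This sidesteps the bookkeeping issue you flagged.
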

	\begin{remark}[{\emph{Maximum resilience}}]
		Condition \eqref{eq:doscondition} is referred to as the
		\emph{maximum resilience} against DoS attacks one can achieve for an
		open-loop unstable system \cite[Lemma 3]{FengResilient}.  In other words,
		if \eqref{eq:doscondition} does not hold, i.e., ${1}/{\nu_f} +
		{1}/{\nu_d} \ge 1$, then one can always design a sequence of DoS attacks
		to render the system unstable whatever control strategy is used.  For
		example, consider a sequence of DoS attacks under which $l_t = 1$ for all
		$t \in \mathbb{N}_0$.  It is easy to check that this sequence of attacks
		satisfies Assumptions \ref{as:dosfre} and \ref{as:dosdur} with $\nu_f =
		1$, $\nu_d = \infty$, $\kappa_d = 1$, and $\kappa_f = 1$.  In addition,
		it can be obtained that ${1}/{\nu_f} + {1}/{\nu_d} = 1$.  However, it is
		impossible to stabilize an open-loop unstable system under such attacks
		as there will be no successful transmissions from the system to construct
		meaningful control input signals.
	\end{remark}
	\subsection{Fundamental Lemma}\label{sec:fundamental}
	The objective of this present paper is to design resilient controllers for stabilization of an unknown LTI system under DoS attacks and additive noise using only measured input-output data.
	Commonly, if the system matrices $(A, B, C, D)$ were precisely known, a number of proposals have been presented for resilient control of LTI systems \eqref{eq:sys} under DoS attacks obeying Assumptions \ref{as:dosfre} and \ref{as:dosdur}. A natural approach is to endow the closed-loop control system with prediction capabilities such that the missing measurements can be reconstructed (predicted) during DoS \cite{FengResilient}.
	To the best of our knowledge, no previous works have dealt with the stabilization problem of \emph{unknown} systems in the presence of DoS attacks. 
	There are three emerging challenges: i) how to characterize an LTI system as well as infer the wanted quantity from observed noisy input-output data; 
	ii) how to design a resilient controller against DoS attacks based only on data; and, iii) the associated stability analysis and robustness guarantees.
	
	To address the first challenge, we invoke the well-known Fundamental Lemma, that was initially discovered in \cite{willems2005note} and subsequently generalized in \cite{van2020extension} and \cite{fundamental2021}. 
	Before formally presenting the Fundamental Lemma, 
	the standard definition of persistency of excitation  is introduced first.

	\begin{definition}[\emph{Persistency of excitation}]\label{def:pe}
		A sequence $u_N := \{u_t\in\mathbb{R}^{n_u} \}_{t=0}^{N-1}$ is said to be persistently exciting of order $L$ if ${\rm {rank}}(H_L(u_N)) = n_u L$.
	\end{definition}

	Based on Definition \ref{def:pe}, it has been shown in \cite{willems2005note} that any input-output trajectory can be expressed as a linear combination of pre-collected persistently exciting input-output data, which is also known as the Fundamental Lemma.
	\begin{lemma}[\emph{Fundamental Lemma \cite{willems2005note}}]\label{lem:fundamental}
		Consider the noise-free version of the LTI system \eqref{eq:sys} described by
		\begin{subequations}\label{eq:idealsys}
			\begin{align}
			x_{t+1} &= A x_{t} + B u_{t}\\
			y_{t} &= Cx_{t} + D u_{t}.
			\end{align}
		\end{subequations}
		Suppose that the input-output sequence $\{\bar{u}_N^s, \bar{y}_N^s\}:=\{\bar{u}_t^s, \bar{y}_t^s\}_{t = 0}^{N - 1}$ is a trajectory of the system \eqref{eq:idealsys}
		, induced by a persistently exciting input sequence $\bar{u}^s$ of order $L + n_x$.
		Then, $\{\bar{u}, \bar{y}\}:=\{\bar{u}_t, \bar{y}_t\}_{t = 0}^{L - 1}$ is a trajectory of the system \eqref{eq:idealsys} if and only if there exists a vector $g \in \mathbb{R}^{N - L + 1}$ such that the following holds
		\begin{equation}\label{eq:fundamental}
		\left[
		\begin{matrix}
		H_{L}\!\left(\bar{u}^{s}_N\right) \\
		H_{L}\!\left(\bar{y}^{s}_N\right)
		\end{matrix}
		\right] g=\left[
		\begin{matrix}
		\bar{u} \\
		\bar{y}
		\end{matrix}
		\right].
		\end{equation}
	\end{lemma}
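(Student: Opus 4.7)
The plan is to prove the two implications of the biconditional separately, following the classical argument behind Willems' Fundamental Lemma. The ``if'' direction reduces to a linearity check, whereas the ``only if'' direction hinges on a dimension-counting argument that exploits persistency of excitation of order $L+n_x$.

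For the ``if'' direction, I would observe that each column of the stacked Hankel matrix $[H_L(\bar u^s_N)^\top, H_L(\bar y^s_N)^\top]^\top$ is a contiguous length-$L$ window of the measured trajectory $\{\bar u^s_t,\bar y^s_t\}$, and is therefore itself a valid length-$L$ input-output trajectory of the noise-free LTI system \eqref{eq:idealsys}. Because \eqref{eq:idealsys} is linear and time-invariant, the set of length-$L$ input-output trajectories is a linear subspace of $\mathbb{R}^{(n_u+n_y)L}$. Consequently any $g$-weighted linear combination of columns, including the one yielding $[\bar u^\top,\bar y^\top]^\top$, is again a trajectory. This direction invokes neither persistency of excitation nor observability.

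For the ``only if'' direction, the goal is to show that the column span of the stacked Hankel matrix equals the entire length-$L$ behavior. I would first identify the dimension of this behavior by parameterizing trajectories through $(x_0,\bar u)\in\mathbb{R}^{n_x}\times\mathbb{R}^{n_uL}$: the map $(x_0,\bar u)\mapsto(\bar u,\bar y)$ is surjective onto the behavior by construction, and observability renders it injective, so the behavior has dimension exactly $n_x+n_uL$. I would then show that the stacked Hankel matrix attains rank $n_x+n_uL$. A clean route is to decompose the output Hankel rows via the extended observability matrix and the input-to-output Toeplitz operator, expressing $H_L(\bar y^s_N)$ as $\mathcal{O}_L X_N + \mathcal{T}_L H_L(\bar u^s_N)$, where $X_N$ collects the state at each column start. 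This reduces the rank of the stacked Hankel matrix to that of the state-input matrix $[X_N^\top, H_L(\bar u^s_N)^\top]^\top$, and the conclusion then follows by matching dimensions: any target $(\bar u,\bar y)$ sits in an $(n_x+n_uL)$-dimensional space spanned by the columns, so a representing $g$ exists.

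The main obstacle I anticipate is establishing that $[X_N^\top, H_L(\bar u^s_N)^\top]^\top$ has full row rank $n_x+n_uL$. Persistency of excitation of $\bar u^s$ alone guarantees that $H_L(\bar u^s_N)$ realizes any desired input sequence, but matching an arbitrary initial state simultaneously requires converting input richness into state richness. The standard device is to split the PE window of order $L+n_x$ into a ``past'' block of length $n_x$ and a ``future'' block of length $L$: controllability of $(A,B)$ together with richness of the past input block ensures that the induced $X_N$ sweeps out all of $\mathbb{R}^{n_x}$, while PE of the future block supplies the remaining $n_uL$ directions independently. I expect this decoupling of state-freedom from input-freedom, rather than the observability decomposition or the linearity step, to be the delicate core of the proof; once it is in place, necessity follows immediately from the dimension match.
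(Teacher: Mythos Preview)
The paper does not supply its own proof of this lemma; it is quoted verbatim as the Fundamental Lemma of Willems et al.\ \cite{willems2005note} and used as a black box throughout, so there is no in-paper argument to compare your proposal against.

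Your sketch is the standard route and is essentially sound. Two minor comments. First, observability is not actually needed for the ``only if'' direction: once you show that the state--input block $[X_N^\top,\,H_L(\bar u^s_N)^\top]^\top$ has full row rank $n_x+n_uL$, the factorization $[H_L(\bar u^s_N)^\top,\,H_L(\bar y^s_N)^\top]^\top = \Psi_L\,[X_N^\top,\,H_L(\bar u^s_N)^\top]^\top$ already forces the column span of the data Hankel to coincide with the image of $(x_0,\bar u)\mapsto(\bar u,\bar y)$, which is by definition the length-$L$ behavior---no separate dimension count via injectivity is required. Second, the step you correctly flag as delicate (full row rank of the state--input block) genuinely needs \emph{controllability} of $(A,B)$, not merely the stabilizability assumed in Assumption~\ref{as:ctrl}; this is the hypothesis in the original \cite{willems2005note}, and the paper tacitly inherits it when citing the lemma.
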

	Regarding the Fundamental Lemma, two remarks come ready.
	\begin{remark}[{\emph{Requirements of Fundamental Lemma}}]
		Lemma \ref{lem:fundamental} indicates that the length $L$ of the constructed trajectory $\{\bar{u}, \bar{y}\}$ depends on the persistently exciting order of the pre-collected data $\{\bar{u}_t^s, \bar{y}_t^s\}_{t = 0}^{N - 1}$.
		More importantly, it offers a feasible way to design a data-dependent model for the noise-free LTI system \eqref{eq:idealsys}. When dealing with the noisy system \eqref{eq:sys}, some modifications should be made.
	\end{remark}
	\begin{remark}[{\emph{Different ways of pre-collecting trajectories}}]
		The sequence $\{\bar{u}_N^s, \bar{y}_N^s\}$ can be obtained from either a single long enough trajectory or formed by multiple short trajectories obtained by simulating the system using different initial conditions and input sequences.
		In fact, both ways of collecting data are equivalent; please refer to \cite{van2020extension} for details.
	\end{remark}
	\subsection{Model-based observer-based controller}\label{sec:resilient}
	When the system matrices of \eqref{eq:sys} are perfectly known,  model-based control strategies for achieving stability under DoS attacks were designed in, e.g., \cite{FengResilient,sun2020resilient,yang2019mpc}. 
		In \cite{sun2020resilient,yang2019mpc}, although MPC scheme were used for resilient control, neither the system model nor the attack model resemble that in the present work.
		On the other hand, the setting in \cite{FengResilient} is the same as this paper.
		Moreover, system adopting the model-based observer-based control strategy in \cite{FengResilient} achieves maximum resilient against DoS attacks.
	The key idea behind this strategy is to equip an observer at the sensor side to estimate the state $x_t$ at every instant, and adopt a predictor-based state-feedback controller using the predicted state $\hat{x}_t$-based control law $u_t = K\hat{x}_t$.
		When there is no DoS attack (i.e., $t=s_r$), the predictor receives new estimated state $\bar{x}_t$ successfully from the observer, and updates the predicted state $\hat{x}_t$; during DoS attacks (i.e., $t \ne s_r$) however, the predictor simply updates the predicted state following a prediction step based on the system model. Mathematically, the observer is given by 
		\begin{subequations}\label{eq:observer}
			\begin{align}
			\bar{x}_{t+1} &= A \bar{x}_{t} +L(y_t - C\bar{x}_{t}) + B u_t\label{eq:observer1}\\
			u_t &= K \hat{x}_t \label{eq:observer2}
			\end{align}
		\end{subequations}
		and the predictor-based controller is given by
		\begin{subequations}\label{eq:predictor}
			\begin{align}
			\hat{x}_{t+1}& = A \hat{x}_t + B u_t, \label{eq:predictor1}\\
			\hat{x}_{t}& = \bar{x}_{t},~~& t = s_r\label{eq:predictor2}\\
			u_t &= K \hat{x}_t \label{eq:predictor3}
			\end{align}
		\end{subequations}
		where matrix $L$ is a deadbeat observer gain matrix such that $(A - LC)^{\eta} = 0$, and $\eta$ is the observability index of the system \cite{OreillyJ}; and
		$K$ is any stabilizing feedback gain matrix, i.e.,  $A + BK$ is Schur stable.
	
	It has been shown in \cite{FengResilient} that, for DoS attacks satisfying \eqref{eq:doscondition}, the error between $\hat{x}_t$ and ${x}_t$ is reset to a constant depending on the noise bound and the system matrices at least once every $T$ time instants. 
	Indeed, this resetting is guaranteed because there exists at least one  successful transmission, in any sequence of $T$ time instants, as guaranteed by Lemma \ref{lem:dos}. 
	
	Without loss of generality, consider the interval $[0, T-1]$, with $s_0 = T - 1$ being the first successful transmission.
	Define the error $\bar{e}_{t} := \bar{x}_t - x_t$, and the error $\hat{e}_{t} := \hat{x}_t - x_t$.
	It can be obtained from \eqref{eq:sys}, \eqref{eq:observer}, and \eqref{eq:predictor} that
	\begin{subequations}\label{eq:error}
		\begin{align}
		\bar{e}_{t+1} &= (A - LC) \bar{e}_{t} +L n_t -  w_t\label{eq:error1}\\
		\hat{e}_{t+1} &= A \hat{e}_{t} - w_t,~~~ &t \ne s_r\label{eq:error2}\\
		\hat{e}_{t} &= \bar{e}_{t},~~~ &t = s_r\label{eq:error3}.
		\end{align}
	\end{subequations}
	For $t = s_0$, it follows from the recursion in \eqref{eq:error} that 
	\begin{subequations}
		\begin{align}
		\hat{e}_{s_0} = \bar{e}_{s_0}& = (A - LC)^{\eta}\bar{e}_{s_0 - \eta }\nonumber\\
		&~~~~ - \sum_{i = 0}^{\eta - 1} (A - LC)^{i}(Ln_{s_0 - i-1} - w_{s_0- i-1})\label{eq:errorprop1}\\
		&=  - \sum_{i = 0}^{\eta - 1} (A - LC)^{i}(Ln_{s_0 - i-1} - w_{s_0- i -1}) \label{eq:errorprop2}
		\end{align}
	\end{subequations}
	where \eqref{eq:errorprop2} follows because $(A - LC)^{\eta} = 0$.
	Moreover, since matrices $A$, $
	L$, $C$, and noise $w_t$, $n_t$ are all finite and bounded, then the summation of products of finite moments of these terms is also bounded. 
	This implies that when there is a successful transmission, the error is reset to a bounded value; that is, the estimated state matches the actual state well. 
	For DoS attacks obeying \eqref{eq:doscondition}, Lemma \ref{lem:dos} guarantees that  this \emph{``predict-then-reset''} cycle of the estimation error, happens frequently (in fact, at least once every $T$ time instants).
	In conclusion, under our working assumptions on the DoS attacks, the estimation error of the predictor in \eqref{eq:predictor} with the selected parameters is always bounded, and the system can be stabilized.
	Obviously, successful application and implementation of the predictor-based controller in \eqref{eq:predictor} to defend against DoS attacks hinges on perfect knowledge of the system model. When this knowledge is not available, 
	although standard system identification approaches can be used to obtain an estimate of the model, 
	this step can be time-consuming and computationally expensive as the size of the system increases. For this reason, data-driven methods, deriving control strategies directly from data without performing explicit system identification procedures, have become prevalent recently.
	\section{Data-driven Resilient Control under DoS}
	In this section, we address the challenging stabilization problem of
	unknown LTI systems in the presence of DoS attacks. Our proposal is to
	redesign the robust predictor-based controller in \eqref{eq:predictor}
	leveraging the Fundamental Lemma \eqref{eq:fundamental}, to develop a novel
	resilient control scheme that capitalizes purely on data.
	Specifically, a data-dependent controller is first constructed to follow
	the predict-then-reset circle, which resembles the data-driven MPC developed in \cite{berberich2019data}.
	Capitalizing on this controller, a DoS-resilient MPC algorithm is proposed
	and subsequently corroborated by a stability analysis.

	\subsection{Data-driven resilient predictive control}
	According to Section \ref{sec:resilient}, a known LTI system under DoS attacks can be stabilized by the predictor-based controller \eqref{eq:predictor}. 
	Without knowing the system matrices $A$ and $B$, 
	MPCs are preferred instead of state-feedback controllers in which closed-loop stabilizing feedback gain matrices  $K$ are in general hard to find. 
	Traditional MPC schemes (e.g., \cite{rawlings2019model, Mayne2000Constrained}) can predict future trajectories based on the system model, initial conditions, and using some terminal constraints.
		The optimal future trajectory is generated by minimizing a cost function.
		Choosing a control input from the optimal future trajectory, the system can be stabilized.
	Thanks to Lemma \ref{lem:fundamental}, any trajectory of an LTI system can be exactly characterized by a (single) input-output trajectory that is persistently exciting of enough order.
	This naturally inspires one to construct a data-driven MPC scheme, that is predicting future trajectories using some input-output collections to replace a system model.
	To sum up, our idea is to replace the model-based controller in \eqref{eq:predictor} with a data-driven MPC, which builds solely on data consisting of a pre-collected input-output trajectory, initial conditions, as well as terminal constraints.

	Before formally introducing our data-driven MPC, a prerequisite assumption on the pre-collected data is posed.
	Let $L$ denote the prediction horizon of the MPC scheme.
	At the time instant when no DoS attack occurs (i.e., $t = s_r$), the MPC scheme predicts a trajectory of  $L$ steps in the future, i.e., from $t$ to $t + L - 1$, and only the input obtained for time $t$ will be used to control the system.
	According to Lemma \ref{lem:fundamental}, to be able to generate a predicted trajectory of length $L$, the pre-collected data should be persistently exciting of at least order $L$.
	To validate this requirement, we make the following assumption on the pre-collected data.
	\begin{assumption}[{\emph{Pre-collected data}}]\label{as:multitre}
		Let sequence $\{u^{s}_{N},\! y^{s}_{N}\} := \{u^{s}_{t}, y^{s}_{t}\}_{i = 0}^{N - 1}$ denote an input-output trajectory generated by the LTI system \eqref{eq:sys} from initial condition $x_0^{s}$, in which the input sequence $\{u^{s}_N\}$ is persistently exciting of order $L + n_x + \eta$, and the output sequence $\{y^{s}_N\}$ is collected offline without network-induced noise $n_t$.
	\end{assumption}

	Based on the assumption above, we build on the work of \cite{berberich2019data} and advocate the following data-driven MPC for control of unknown LTI systems without DoS attacks, which is solved at every time $t\in\mathbb{N}_0$.
	\begin{subequations}\label{eq:mpc}
		\begin{align}
		J^*_L(u_{[t - \eta, t - 1]}, & \zeta_{[t - \eta, t - 1]}) := \nonumber \\
		\underset{g(t), h(t)\atop
			\bar{u}_i(t), \bar{y}_i(t)}{\min}
		~&\sum_{i = 0}^{L - 1} \ell(\bar{u}_i(t), \bar{y}_i(t)) \!+\! \lambda_{g}\bar{v} \Vert g(t)\Vert^2 \!+\! \frac{\lambda_h}{\bar{v}}\Vert h(t)\Vert^2 \nonumber\\
		{\rm s.t.}\quad \;&
		\left[
		\begin{matrix}
		\bar{u}(t)\\
		\bar{y}(t) + h(t)
		\end{matrix}
		\right] = 
		\left[
		\begin{matrix}
		{H}_{L + \eta}(u^s)\\
		{H}_{L + \eta}(y^s)
		\end{matrix}
		\right] g(t), \label{eq:mpc1}\\
		&	\left[
		\begin{matrix}
		\bar{u}_{[-\eta, -1]}(t)\\
		\bar{y}_{[-\eta, -1]}(t)
		\end{matrix}
		\right] = 
		\left[
		\begin{matrix}
		u_{[t - \eta, t - 1]}\\
		\zeta_{[t - \eta, t - 1]}
		\end{matrix}
		\right], \label{eq:mpc2}\\
		&	\left[
		\begin{matrix}
		\bar{u}_{[L -\eta, L-1]}(t)\\
		\bar{y}_{[L-\eta, L-1]}(t)
		\end{matrix}
		\right] = 
		\left[
		\begin{matrix}
		0\\
		0
		\end{matrix}
		\right],\label{eq:mpc3}\\
		&~ \bar{u}_i \in \mathbb{U},~~ i \in [0,L - 1]\label{eq:mpc4}.
		\end{align}
	\end{subequations}
	where
	\begin{itemize}
		\item [1)]
		Constraint \eqref{eq:mpc1} is reminiscent of \eqref{eq:fundamental} in Lemma \ref{lem:fundamental}, where $\bar{u}(t) = [\bar{u}_{-\eta}^\top(t), \cdots\!, \bar{u}^\top_{L}(t)]^\top\in \mathbb{R}^{n_u(L + \eta)}$ and $\bar{y}(t) = [\bar{y}_{-\eta}^\top(t), \cdots\!, \bar{y}^\top_{L}(t)]^\top\in \mathbb{R}^{n_y(L + \eta)}$. To sustain feasibility of the optimization problem,
		a slack vector $h(t) = [h_{-\eta}^\top(t), \cdots\!, h^\top_{L}(t)]^\top\!\in\! \mathbb{R}^{n_y(L + \eta)}$ is introduced to compensate for the network-induced noise $n_t$;
		\item [2)]
			Constraints \eqref{eq:mpc2} and \eqref{eq:mpc3} use $\eta$ input and output pairs to restrict the state $x$ at time instants $t$ and $t + L - 1$, which can be implemented thanks to the observability of the system.
			To be specific, Constraint \eqref{eq:mpc2} is imposed to maintain the continuity of the true trajectory at time $t$, and Constraint \eqref{eq:mpc3} ensures convergence of the predicted trajectories;	
		\item [3)]
		Constraint \eqref{eq:mpc4} indicates that for $i \in [0, L - 1]$, the control inputs $\bar{u}_i$ are chosen from a given convex set $\mathbb{U}$ with $0 \in \mathbb{U}$, e.g., $\mathbb{U} = [-u_{\max}, u_{\max}]$ for some $u_{\max}$;
		\item[4)]
		The cost function $\ell (\bar{u}, \bar{y})$ is a quadratic stage cost given by $\ell (\bar{u}, \bar{y}) = \Vert \bar{u} \Vert^2_{R_1} + \Vert \bar{y} \Vert^2_{R_2}$, where $R_1\succ 0 $, and $ R_2 \succ 0$; the penalty term $\|h(t)\|^2$ enforces sparing use of slacks to produce solutions of
			minimal constraint violation; and the coefficient $\lambda_h > 0$ balances between minimizing the cost and penalizing the constraint violation; and $\|g(t)\|^2$ with coefficient $\lambda_g > 0$ is introduced to restrain the effect of process noise $w_t$.
			Notice that the regularization coefficients of $\|h(t)\|^2$ and $\|g(t)\|^2$ also depend on the upper bound on the noise. 
			This indicates that
			i) the slack variable $h(t)$ decreases with the noise level $\bar{v}$; and ii) $\|h(t)\|^2$ is small enough compared with the $\lambda_g \bar{v}^2\|g(t)\|^2/\lambda_h$, which is required to prove the system stability.
	\end{itemize}
	
	It is easy to see that all the constraints and the objective function are convex, so
	Problem \eqref{eq:mpc} is convex and can be solved efficiently by means of off-the-shelf convex programming solvers. 
	
	\begin{remark}[{\emph{Comparison with the work \cite{berberich2019data}}}]
		Difference between our data-driven MPC scheme in \eqref{eq:mpc} and that of \cite{berberich2019data} lies in two aspects.
		First, instead of an ideal system, we consider LTI systems containing process noise $w_t$. 
		In this setting, the Hankel matrix is a noisy one that does not contain exact system trajectories.
		Therefore, to invoke the Fundamental Lemma, inspired by \cite{krishnan2021On}, the error between the Hankel matrix constructed by the ideal input-output trajectory and the actual trajectory is characterized in the following proof.
		Moreover, the data-driven MPC problem in \cite{berberich2019data} is non-convex, which is {NP-hard} in general and computationally challenging. 
		Inspired by \cite{berberich2021robust}, the cost function $J^*_L$ in the present work is modified by adding noise bound as the coefficients of $\|h(t)\|^2$ and $\|g(t)\|^2$.
			In this manner, 
			the non-convex constraint $\Vert h_i(t)\Vert_{\infty} \le \bar{n} (1 + \Vert g(t)\Vert_1)$ 
			(where $\Vert \cdot \Vert_1$, and $\Vert \cdot \Vert_{\infty}$ are the $\ell_{1}$-, and $\ell_{\infty}$-norm of vectors, respectively)
			employed in \cite{berberich2019data} to guarantee closed-loop stability for their data-driven MPC, is not needed.
			Furthermore, the data-driven MPC scheme in \eqref{eq:mpc} can also guarantee the system stability if the  pre-collected output data is corrupted by noise, i.e., $\tilde{y}^s = y^s + \tilde{n}$.
			In this setting, a similar stability analysis can be conducted yet involving more complicated noise terms.
			For ease of exposition, the present work assumes that the pre-collected output is free from noise.
		
	\end{remark}
	Upon solving 
	\eqref{eq:mpc}, a future trajectory of length $L$ is obtained.
	In addition, due to constraints \eqref{eq:mpc2} and \eqref{eq:mpc3}, length $L$ should be long enough.
	Therefore, the following assumption is made.
	\begin{assumption}[{\emph{Prediction horizon}}]\label{as:horizon}
		The problem horizon satisfies $L \ge \eta + n_x$.
	\end{assumption}

	It is worth stressing that the proposed MPC in \eqref{eq:mpc} requires no system matrices but only some input-output trajectories collected from the system \eqref{eq:sys} by means of off-line experiments.
	However, when the linear system is subject to DoS attacks, one may not be able to solve \eqref{eq:mpc} at every time instant, since packets containing the last $\eta$ outputs may not successfully arrive at the controller.
	To address the challenge emerged with missing packets due to the DoS attacks, the proposed data-driven MPC scheme is modified by virtue of the idea behind the predictor-based resilient controller in Section \ref{sec:resilient}.
	Specifically, when there is
	no DoS attack (i.e., $t = s_r$), the controller receives new output packets successfully from the plant,  solves Problem \eqref{eq:mpc}, and uses the  input obtained for time $t$ as the control input. However,
	when a DoS attack occurs (i.e., $t \ne s_r$), if $t - s_r \le L$, 
	it simply sends the input for time $t$ from the most recent solution of Problem \eqref{eq:mpc}; that is, for $t \in [s_r, s_r + L]$, the first $t - s_r$ computed inputs $\bar{u}_i(s_r) \in [0, t - s_r]$ are to be used sequentially when there is a DoS attack.
	On the other hand, if $t >s_r + L$, i.e., no  control input for time $t$ has been computed from previous solving of \eqref{eq:mpc}, then zero inputs will be used until the next successful transmission comes and Problem \eqref{eq:mpc} is solved again for new control inputs.
	
	For reference, our proposed resilient data-driven control scheme is summarized in Algorithm \ref{alg:mpc}.
	
	\begin{algorithm}[h]
		\caption{Data-driven resilient control under DoS attacks.}
		\label{alg:mpc}
		\begin{algorithmic}[1]
			\STATE {\bfseries Input:} Predict horizon $L \ge  \eta + {n_x}$; parameters of the cost function $R_1 \succ 0$, $R_2 \succ 0$, $\lambda_g > 0$ and $\lambda_h > 0$;
			noise bound $\bar{v}$;
			input-output trajectories $\{(u^{s}_{0, N - 1}, y^{s}_{0, N - 1})\}$ of system \eqref{eq:sys} from initial condition $x_0^{s}$, where input sequence $\{u^{s}\}$ is persistently exciting of order $L + 2 \eta$.
			\STATE {\bfseries Construct} Hankel matrix for the input-output trajectory, i.e., $H\! =\! [H^\top_{L + \eta}(u^{s}), H^\top_{L + \eta}(y^{s})]^\top$.
			\STATE {\bfseries If} $t = s_r$, do \label{alg:mpc1}				
			\STATE \quad Use the past $\eta$ measurements, i.e.,  $u_{[t - \eta, t - 1]}$ and $\zeta_{[t - \eta, t - 1]}$, to solve Problem \eqref{eq:mpc}.
			Set $u_t = \bar{u}_0(t)$.
			\STATE {\bfseries Else if} $t \ne s_r$
			\STATE {\bfseries \quad if} $t - s_r \le L$
			\STATE \quad \quad Set $u_t = \bar{u}_{t - s_r}(s_r)$.\label{alg:mpcu1}
			\STATE {\bfseries \quad else if} $t - s_r > L$ \label{alg:mpcL}
			\STATE \quad \quad Set $u_t = 0$.\label{alg:mpcu2}
			\STATE {\bfseries \quad End if}
			\STATE {\bfseries End if}
			\STATE {\bfseries Set} $t = t + 1$ and go back to \ref{alg:mpc1}.
		\end{algorithmic}
	\end{algorithm}

	\subsection{Stability analysis}\label{sec:stable}
	In this following, performance of the proposed data-driven resilient controller is analyzed.  
	The stability analysis follows the same line as that of model-based resilient controller, in e.g., \cite{FengResilient}.
	It proceeds in two steps. Similar to the model-based controller in Section \ref{sec:resilient}, we first show that the error between the predicted output by solving Problem \eqref{eq:mpc} and the actual output is always bounded.
	Next, we construct a Lyapunov function, and establish the stability of system \eqref{eq:sys} with the devised data-driven resilient controller under conditions on the DoS attacks and level of noise.
	
	Before proceeding, we define some variables that will be frequently used throughout the proof.
	According to \eqref{eq:mpc2}, at each time  instant $t$, the most recent $\eta$ inputs and outputs, i.e., $[u^\top_{[t - \eta, t - 1]}, \zeta^\top_{[t - \eta, t - 1]}]^\top$, are used to initialize (the trajectory of) Problem \eqref{eq:mpc}, so its solution depends on $[u^\top_{[t - \eta, t - 1]}, \zeta^\top_{[t - \eta, t - 1]}]^\top$.
	Moreover, it has been proved in \cite{1991Linear} that the state $x_{t - \eta}$ of system \eqref{eq:sys} can be derived from $u_{[t - \eta, t - 1]}$ and $y_{[t - \eta, t - 1]}$ when system matrices $(A, B, C)$ are known, and  pair $(C,A)$ is observable.
	are known.
	To this end, let us define the following augmented state vectors of the system 
	\begin{equation}
	z_t := \left[
	\begin{matrix}
	u_{[t - \eta, t - 1]}\\
	y_{[t - \eta, t - 1]}
	\end{matrix}
	\right],\qquad 
	\tilde{z}_t := \left[
	\begin{matrix}
	u_{[t - \eta, t - 1]}\\
	\zeta_{[t - \eta, t - 1]}
	\end{matrix}
	\right].
	\end{equation}
	In addition, let $\Gamma_z$ denote the linear transformation from $z$ to an arbitrary but fixed state $x$ that obeys a similar transformation expressed analytically using $(A, B, C)$, i.e., $x_t = \Gamma_z z_t$, and we have that $\Vert x_t\Vert \le \gamma_z \Vert z_t\Vert$, where $\gamma_z := \Vert\Gamma_z\Vert$. 
	
	The following lemma provides an upper bound on the error between the predicted output $\bar{y}$ and the actual output $y$.
	\begin{lemma}[{\emph{Bounded output prediction error}}]\label{lem:bounde}
		Consider the system \eqref{eq:sys} with the controller in Algorithm \ref{alg:mpc}.
		Suppose that Assumptions \ref{as:ctrl}--\ref{as:horizon} hold.
		If i) DoS attacks satisfy Assumptions \ref{as:dosfre} and \ref{as:dosdur} with parameters $\nu_d$ and $\nu_f$ obey \eqref{eq:doscondition} and, ii) Problem \eqref{eq:mpc} is feasible at $s_r$, and $J^*_{L}(\tilde{z}_{s_r}) \le \bar{J}$, then the error  $e_{y,t} := y_t - \bar{y}^*_{t - s_r}(s_r)$ between the predicted output contained in the solution of \eqref{eq:mpc} and the actual output satisfies
		\begin{equation}\label{eq:boundey}
		\Vert e_{y,s_r + q}\Vert^2 \le \beta_1(\bar{v}, q), \quad 
		q \in [0,s_{r + 1} - s_{r})
		\end{equation}
		with a $\mathcal{KL}$-function $\beta_1(\bar{v}, q)$ defined by 
			\begin{align}
			&\beta_1(\bar{v}, q) :=b_3 \xi^{q + \eta}\sqrt{\eta}\bar{v}+ b_2 \xi^{q + \eta}\Bigg[\sqrt{\eta}\bar{v} + \sqrt{\bar{J}\bar{v}/\lambda_h} \label{eq:beta_1}\\
			&+ \Vert \Upsilon_{\eta}(I) \Vert \sqrt{\frac{\eta(N - L - \eta + 1)\bar{J}\bar{v}}{\lambda_g}}\Bigg]  +b_1\!\!\!\!\! \sum^{q + \eta - 1}_{j = 0}\!\!\!\!\xi^{j} \bar{v} + \sqrt{\frac{\bar{J}\bar{v}}{\lambda_h}}\nonumber\\
			&  +\!\big[2b_1\!\! \sum^{N - 1}_{j = \eta}\xi^{j} \! +\! \Vert \Upsilon_{L + \eta}(I)\Vert\big]\! \sqrt{\frac{(L + \eta)(N \!-\! L \!-\! \eta)\bar{J}\bar{v}}{\lambda_g}} \nonumber
			\end{align}	
			where $b_1,~ b_2,~ b_3,~ \xi >0$ are constants such that $\Vert CA^{j}\Vert \le b_1 \xi^j$, $\Vert CA^{j + \eta}\Theta^{\dag}\Vert \le b_2 \xi^j$, and $\Vert CA^{j + \eta}\Theta_{\eta}^{\dag}\Upsilon_{\eta}(I)\Vert \le b_3 \xi^j$. 
			For $n = 1,2,\cdots$ matrices $\Theta_{n}$ and $\Upsilon_{n}(I)$ are defined by   
			\begin{subequations}
				\begin{align}
				\Theta_{n} \!&\,:=\! \left[
				\begin{matrix}
				C\\
				CA\\
				\vdots\\
				CA^{n - 1}
				\end{matrix}
				\right]\! \label{eq:theta1}\\
				\Upsilon_{n}(I)\!&\,:= \!
				\left[\begin{matrix}
				\!0\! & 0 & \cdots  & 0\\
				\!C \!& 0 & \cdots & 0\\
				\!\vdots\! & \!\vdots\! & \!\ddots \!&\! \vdots\!\\
				\!CA^{n - 2}\! & CA^{n - 3} & \cdots & 0
				\end{matrix}\right]\!\label{eq:upsilonI}.
				\end{align}
			\end{subequations}	
		
	\end{lemma}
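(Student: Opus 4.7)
The plan is to compare the actual plant trajectory after time $s_r$ with a \emph{virtual} noise-free trajectory that the MPC optimization implicitly constructs through the Fundamental Lemma, and to show that their divergence is controlled by the norms of the MPC slack $h(s_r)$, the multiplier $g(s_r)$, the process noise $w$, and the network-induced noise $n$---all of which are controllable via the feasibility hypothesis $J^*_L(\tilde z_{s_r})\le\bar J$ and Assumption \ref{as:noisebound}.

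The first step is to extract from $J^*_L(\tilde z_{s_r})\le\bar J$ the elementary bounds $\|h(s_r)\|^2\le\bar J\bar v/\lambda_h$ and $\|g(s_r)\|^2\le\bar J/(\lambda_g\bar v)$, which hold because both regularizers are nonnegative summands of $J^*_L$. Next, writing $y^s_t=\bar y^s_t+C\sum_{i=0}^{t-1}A^{t-1-i}w^s_i$, I decompose $H_{L+\eta}(y^s)=H_{L+\eta}(\bar y^s)+\Upsilon_{L+\eta}(I)W^s$, where $\bar y^s$ is the (unobserved) noise-free counterpart of the collected data and $W^s$ is a suitably padded Hankel-like matrix of the process noise realizations from the data-collection phase. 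Substituting into the MPC equality constraint \eqref{eq:mpc1} and invoking Lemma \ref{lem:fundamental} on the noise-free system \eqref{eq:idealsys}, there exists a virtual state $\tilde x_{s_r-\eta}$ such that $H_{L+\eta}(\bar y^s)g(s_r)$ is an exact length-$(L+\eta)$ output trajectory of \eqref{eq:idealsys} from $\tilde x_{s_r-\eta}$ driven by $\bar u(s_r)=H_{L+\eta}(u^s)g(s_r)$. The actual plant output from time $s_r-\eta$ admits the analogous representation $y_{s_r+q}=CA^{q+\eta}x_{s_r-\eta}+(\text{input contributions})+C\sum_{j=0}^{q+\eta-1}A^jw_{s_r+q-1-j}$. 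Since Algorithm \ref{alg:mpc} drives the plant with exactly $\bar u(s_r)$ over the interval $q\in[0,s_{r+1}-s_r)$, the input contributions cancel upon subtraction, leaving
\begin{equation*}
e_{y,s_r+q} = CA^{q+\eta}(x_{s_r-\eta}-\tilde x_{s_r-\eta}) + C\sum_{j=0}^{q+\eta-1}A^jw_{s_r+q-1-j} - h_q(s_r) + \bigl(\Upsilon_{L+\eta}(I)W^sg(s_r)\bigr)_q.
\end{equation*}
The state mismatch $x_{s_r-\eta}-\tilde x_{s_r-\eta}$ is then resolved through the observability identity $\Theta_\eta x_{s_r-\eta}=y_{[s_r-\eta,s_r-1]}-(\text{input contributions})-\Upsilon_\eta(I)w_{[s_r-\eta,s_r-1]}$ and its analogue for $\tilde x_{s_r-\eta}$ obtained from the first $\eta$ rows of the Fundamental-Lemma trajectory; using $\zeta=y+n$ and the pseudo-inverse $\Theta_\eta^\dag$, this yields a linear expression in $n_{[s_r-\eta,s_r-1]}$, $w_{[s_r-\eta,s_r-1]}$, $h_{[-\eta,-1]}(s_r)$, and $\Upsilon_\eta(I)W^sg(s_r)$.

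The main obstacle is the pair of terms $\Upsilon_{L+\eta}(I)W^sg(s_r)$ and $\Upsilon_\eta(I)W^sg(s_r)$, because the column count of $W^s$ grows with $N-L-\eta+1$; the naive spectral bound $\|W^s\|\le\sqrt{(L+\eta)(N-L-\eta+1)}\,\bar v$ is what produces the $\sqrt{N}$-type factors appearing in \eqref{eq:beta_1}. Combining this with the geometric bounds $\|CA^j\|\le b_1\xi^j$, $\|CA^{j+\eta}\Theta_\eta^\dag\|\le b_2\xi^j$, $\|CA^{j+\eta}\Theta_\eta^\dag\Upsilon_\eta(I)\|\le b_3\xi^j$ (which supply the decay factor $\xi^{q+\eta}$ on the state-mismatch contribution), with the initial-window noise bounds $\|n_{[s_r-\eta,s_r-1]}\|,\|w_{[s_r-\eta,s_r-1]}\|\le\sqrt{\eta}\,\bar v$, and with the $h$- and $g$-bounds from the first step, a single application of the triangle inequality assembles exactly the $\mathcal{KL}$-function $\beta_1(\bar v,q)$ of \eqref{eq:beta_1}, completing the argument.
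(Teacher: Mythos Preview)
Your approach is essentially the paper's: decompose the Hankel matrix of collected outputs into a noise-free part plus noise corrections, invoke the Fundamental Lemma on the former to obtain a virtual trajectory, subtract it from the actual plant trajectory (the input terms cancel because Algorithm~\ref{alg:mpc} applies $\bar u^*(s_r)$), resolve the initial-state mismatch via $\Theta_\eta^\dag$, and assemble the bound from the cost-based estimates $\|h^*\|^2\le\bar J\bar v/\lambda_h$, $\|g^*\|^2\le\bar J/(\lambda_g\bar v)$ together with the noise bounds.

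One correction is needed. Your single-term decomposition $H_{L+\eta}(y^s)=H_{L+\eta}(\bar y^s)+\Upsilon_{L+\eta}(I)W^s$ cannot hold as written: $\Upsilon_{L+\eta}(I)$ has a zero first block row, whereas the $k$-th column of $H_{L+\eta}(y^s)-H_{L+\eta}(\bar y^s)$ has first block entry $C\sum_{i=0}^{k-1}A^{k-1-i}w^s_i$, which is nonzero for $k\ge 1$. The paper splits this difference into \emph{two} pieces,
\[
\Theta_{L+\eta}W^s \;+\; \Upsilon_{L+\eta}(I)\,H_{L+\eta}(w^s_N),
\]
where $W^s=[\,0,\;w^s_0,\;\ldots,\;\sum_{i=0}^{N-L-\eta-1}A^iw^s_{N-L-\eta-1-i}\,]$ records the accumulated process-noise effect on the initial state of each Hankel window. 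It is precisely the $\Theta_{L+\eta}W^s g^*(s_r)$ contribution---propagated both through the state-mismatch term via $\Theta_\eta^\dag$ and through the residual $\bar y^*-H_{L+\eta}(\hat y^s)g^*$---that produces the factor $b_1\sum_{j=\eta}^{N-1}\xi^j$ in \eqref{eq:beta_1}; omitting it your final bound would be strictly smaller than $\beta_1$. Once this second noise term is carried through, your argument coincides with the paper's.
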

	\begin{proof}
			It follows from \eqref{eq:sys} that the inputs and outputs of system can be rewritten as
			\begin{align}\label{eq:uyuxw}
			&\quad \,\left[
			\begin{matrix}
			u_{[s_r - \eta, s_r + L - 1]}\\
			y_{[s_r - \eta, s_r + L - 1]}
			\end{matrix}
			\right] \\ 
			& = \Psi_{L + \eta}
			\left[
			\begin{matrix}
			u_{[s_r - \eta, s_r + L - 1]}\\
			x_{s_r - \eta}
			\end{matrix}
			\right] + 
			\left[
			\begin{matrix}
			0\\
			\Upsilon_{L + \eta}(I)
			\end{matrix}
			\right]
			w_{[s_r - \eta, s_r + L - 1]}
			\end{align}	 
			where
			\begin{equation}\label{eq:Psi}
			\Psi_{L + \eta} := \left[
			\begin{matrix}
			I & 0\\
			\Upsilon_{L + \eta}(B) & \Theta_{L + \eta} 
			\end{matrix}
			\right]
			\end{equation}
			and symbols $I$ and $0$ denote respectively identity and zero matrices with suitable dimensions.
			For $n = 1, 2, \cdots$, matrix $\Upsilon_{n}(B)$ is given by
			\begin{equation}
			\Upsilon_{n}(B)\!\,:= \!
			\left[\begin{matrix}
			\!0\! & 0 & \cdots  & 0\\
			\!CB \!& 0 & \cdots & 0\\
			\!\vdots\! & \!\vdots\! & \!\ddots \!&\! \vdots\!\\
			\!CA^{n - 2}B\! & CA^{n - 3}B & \cdots & 0
			\end{matrix}\right]\!\label{eq:upsilonB}.\\
			\end{equation}
			Let $x^s_N:=\{x^s_t\}_{t =0}^{N - 1}$ and $w^s_N:=\{w^s_t\}_{t =0}^{N - 1}$ denote the state trajectory and noise corresponding to the collected trajectory $(u^s_N, y^s_N)$, respectively, and one gets from \eqref{eq:uyuxw} that
			\begin{subequations}\label{eq:Huyuxw}
				\begin{align}
				\left[
				\begin{matrix}
				H_{L + \eta}(u^s_N)\\
				H_{L + \eta}(y^s_N)
				\end{matrix}
				\right] 
				& = \Psi_{L + \eta} \left[
				\begin{matrix}
				H_{L + \eta}(u^s_N)\\
				H_{1}(x^s_{N - L - \eta + 1})
				\end{matrix}
				\right]\\
				&~~~+ 
				\left[
				\begin{matrix}
				0\\
				\Upsilon_{L + \eta}(I)
				\end{matrix}
				\right]H_{L + \eta}(w^s_N).
				\end{align}	
			\end{subequations} 
			Similarly, let $(\hat{x}^s_N, u^s_N, \hat{y}^s_N)$ denote a trajectory of the system $\eqref{eq:idealsys}$, it can be derived that
			\begin{subequations}\label{eq:Huyuxw_ideal}
				\begin{align}
				\left[
				\begin{matrix}
				H_{L + \eta}(u^s_N)\\
				H_{L + \eta}(\hat{y}^s_N)
				\end{matrix}
				\right] = \Psi_{L + \eta}
				\left[
				\begin{matrix}
				H_{L + \eta}(u^s_N)\\
				H_{1}(\hat{x}^s_{ N - L - \eta+1})
				\end{matrix}
				\right].
				\end{align}
			\end{subequations}
			Choosing 
			\begin{subequations}
				\begin{align}
				x^s_0 = \hat{x}^s_0
				\end{align}
			\end{subequations}
			and substituting \eqref{eq:Huyuxw_ideal}
			into \eqref{eq:Huyuxw}, we arrive at
			\begin{align}
			&\quad \,\left[
			\begin{matrix}
			H_{L + \eta}(u^s_N) \\
			H_{L + \eta}(y^s_N)
			\end{matrix}
			\right] \nonumber\\ 
			& = \Psi_{L + \eta}
			\left[
			\begin{matrix}
			H_{L + \eta}(u^s_N)\\
			H_{1}(\hat{x}^s_{N - L - \eta+1})
			\end{matrix}
			\right] + \Psi_{L + \eta}
			\left[
			\begin{matrix}
			0\\
			W^s
			\end{matrix}
			\right] \nonumber\\
			&~~~ + 
			\left[
			\begin{matrix}
			0\\
			\Upsilon_{L + \eta}(I)
			\end{matrix}
			\right]
			H_{L + \eta}(w^s_N) \label{eq:Huyuxw_21}\\
			& = \left[\!
			\begin{matrix}
			H_{L + \eta}(u^s_N)\\
			H_{L + \eta}(\hat{y}^s_N)
			\end{matrix}\!
			\right] \!\! + \!\Psi_{L + \eta}\!\left[
			\begin{matrix}
			0\\
			W^s
			\end{matrix}
			\right]\!\! +\!\! 
			\left[
			\begin{matrix}
			0\\
			\Upsilon_{L + \eta}(I)
			\end{matrix}
			\right]\!
			H_{L + \eta}(w^s_N) \label{eq:Huyuxw_22}
			\end{align}
			where $W^s := [0, w_0^s, \cdots\!, \sum_{i = 0}^{N - L - \eta - 1}A^iw^s_{N - L - \eta - 1 - i}]$.
			Let $\left(\bar{y}^*(s_r),\bar{u}^*(s_r),g^*(s_r),h^*(s_r)\right)$ denote an optimal solution of Problem \eqref{eq:mpc} at time $s_r$.
			It follows from \eqref{eq:mpc1} that 
			\begin{align}
			\bar{y}^*(s_r) & = H_{L + \eta}(\hat{y}^s_N) g^*(s_r) - h^*(s_r)+ \Theta_{L + \eta}W^sg^*(s_r)\nonumber\\
			&~~~ + \Upsilon_{L + \eta}(I)
			H_{L + \eta}(w^s_N) g^*(s_r) \label{eq:ybar}.
			\end{align}
			Since $\hat{y}^s_N$ is an output trajectory of system \eqref{eq:idealsys}, it follows from Lemma \ref{lem:fundamental} that $H_{L + \eta}(\hat{y}^s_N) g^*(s_r)$ is a trajectory of system \eqref{eq:idealsys}.

		If we let $\tilde{y} = y - H_{L + \eta}(\hat{y}^s_N) g^*(s_r)$, and $\tilde{x}_t$ denote its corresponding state, then dynamics of $\tilde{x}_t$ is given by
			\begin{subequations}\label{eq:xtilde}
				\begin{align}
				\tilde{x}_{t + 1} &= A \tilde{x}_t + w_t\\
				\tilde{y}_t &= C\tilde{x}_t 
				\end{align}
			\end{subequations}
			with initial output condition
			\begin{align}
			&\tilde{y}_{[s_r - \eta, s_r - 1]} \label{eq:yinit} \\
			&= y_{[s_r - \eta, s_r - 1]} - [H_{\eta}(y^s_{N-L+1})g^*(s_r) \nonumber\\
			&~~~ - \Upsilon_{\eta}(I)H_{\eta}(w^s_{N-L+1})g^*(s_r) - \Theta_{\eta}W^sg^*(s_r)]\nonumber\\
			& = y_{[s_r - \eta, s_r - 1]} - [y_{[s_r - \eta, s_r - 1]} + n_{[s_r - \eta, s_r - 1]}\nonumber\\
			&~~~ + h^*_{[-\eta, -1]}(s_r)] + \Theta_{\eta}W^sg^*(s_r) \nonumber\\
			&~~~+ \Upsilon_{\eta}(I)H_{\eta}(w^s_{N-L+1})g^*(s_r) \nonumber\\
			& = - n_{[s_r - \eta, s_r - 1]}- h^*_{[-\eta, -1]}(s_r)+ \Theta_{\eta}W^sg^*(s_r)\nonumber\\
			&~~~ + \Upsilon_{\eta}(I)H_{\eta}(w^s_{N-L+1})g^*(s_r) \nonumber.
			\end{align} 
			
			Recursively, it can be obtained from \eqref{eq:xtilde} that
			\begin{equation}\label{eq:tildey}
			\tilde{y}_{[s_r - \eta, s_r - 1]} = \Theta_{\eta} \tilde{x}_{s_r - \eta}  -  \Upsilon_{\eta}(I) w_{[s_r - \eta, s_r - 1]}
			\end{equation}
			yielding
			\begin{equation}\label{eq:xtilden}
			\tilde{x}_{s_r - \eta}\! =\! \Theta_{\eta}^{\dag}\tilde{y}_{[s_r - \eta, s_r - 1]}\!-\!\Theta_{\eta}^{\dag}\Upsilon_{\eta}(I) w_{[s_r - \eta, s_r - 1]}.
			\end{equation}
			
			Based on \eqref{eq:xtilde}  and \eqref{eq:xtilden}, it follows recursively that
			\begin{align}
			\Vert \tilde{y}_{s_r + q}\Vert  & =\Big\Vert CA^{q + \eta}\tilde{x}_{s_r - \eta} + \sum_{j = 0}^{q + \eta - 1}CA^{j}w_{s_r - \eta  + j}\Big\Vert \nonumber\\
			& = \Big\Vert CA^{q + \eta}\Theta_{\eta}^{\dag}\tilde{y}_{[s_r - \eta, s_r - 1]} \nonumber\\
			&~~~- CA^{q + \eta}\Theta_{\eta}^{\dag}\Upsilon_{\eta}(I) w_{[s_r - \eta, s_r - 1]}\nonumber\\
			&~~~+ \sum_{j = 0}^{q + \eta - 1}CA^j w_{s_r  + q - j - 1} \Big\Vert\label{eq:ytilde}.
			\end{align}
		Substituting \eqref{eq:yinit} into \eqref{eq:ytilde}, one has that
			\begin{align}\label{eq:ytildeq_1}
			\Vert \tilde{y}_{s_r + q}\Vert & = \Big\Vert C A^{q + \eta}\Theta_{\eta}^{\dag}\big[\!- \!n_{[s_r - \eta, s_r - 1]}\!-\! h^*_{[-\eta, -1]}(s_r)\\
			&~~~ + \Upsilon_{\eta}(I)H_{\eta}(w^s_{N - \eta + 1})g^*(s_r) + \Theta_{\eta}W^sg^*(s_r)\big]\nonumber\\
			&~~~ - CA^{q + \eta}\Theta_{\eta}^{\dag}\Upsilon_{\eta}(I) w_{[s_r - \eta, s_r - 1]}\nonumber\\
			&~~~+ \sum_{j = 0}^{q + \eta - 1}CA^j w_{s_r  + q - j - 1}\Big\Vert \nonumber\\
			& \le b_3 \xi^{q + \eta}\sqrt{\eta}\bar{v} + b_2 \xi^{q + \eta}\big[\sqrt{\eta}\bar{v} + \Vert h^*_{[-\eta, -1]}(s_r)\Vert \nonumber\\
			&~~~+ \Vert \Upsilon_{\eta}(I) \Vert \Vert g^*(s_r)\Vert\sqrt{\eta(N - L - \eta + 1)}\bar{v}\big] \nonumber\\
			&~~~+ b_1\sqrt{\eta(N \!-\! L \!-\! \eta)}\!\!\!\! \sum^{N - L - 1}_{j = \eta}\!\!\!\!\xi^{j} \bar{v}\Vert g^*(s_r)\Vert \!+\!  b_1\!\!\!\! \sum^{q + \eta - 1}_{j = 0}\!\!\!\!\xi^{j} \bar{v}\nonumber
			\end{align}
			where constants $b_1, b_2, b_3, \xi >0$ are chosen such that for all $j \in \mathbb{N}_0$, inequalities $\Vert CA^{j}\Vert \le b_1 \xi^j$, $\Vert CA^{j + \eta}\Theta^{\dag}\Vert \le b_2 \xi^j$, and $\Vert CA^{j + \eta}\Theta_{\eta}^{\dag}\Upsilon_{\eta}(I)\Vert \le b_3 \xi^j$ hold true.
			
			Next, we derive bounds for $\Vert g^*(s_r)\Vert$ and $\Vert h^*_{[-\eta, -1]}(s_r)\Vert$.
			Noticing that $J^*(\tilde{z}_{s_r}) \le \bar{J}$, it can be deduced from \eqref{eq:mpc} that 
			\begin{align}
			\frac{\lambda_h}{\bar{v}}\Vert h^*(s_r)\Vert^2\le \bar{J} \quad &\rightarrow \quad\Vert h^*(s_r)\Vert \le \bigg(\frac{\bar{J} \bar{v}}{\lambda_h}\bigg)^{1/2}\\
			\lambda_g \bar{v}\Vert g^*(s_r)\Vert^2 \le \bar{J} \quad&\rightarrow\quad\Vert g^*(s_r)\Vert \le \bigg(\frac{\bar{J} }{\bar{v}\lambda_g}\bigg)^{1/2}.
			\end{align}
			Plugging these bounds into \eqref{eq:ytildeq_1}, leads to
			\begin{align}
			&\Vert \tilde{y}_{s_r + q}\Vert
			\le b_3 \xi^{q + \eta}\sqrt{\eta}\bar{v}+b_2 \xi^{q + \eta}\Big[\sqrt{\eta}\bar{v} + \sqrt{\bar{J}\bar{v}/\lambda_h} \nonumber\\
			&~~~+ \Vert \Upsilon_{\eta}(I) \Vert \sqrt{\eta(N - L - \eta + 1)\bar{J}\bar{v}/\lambda_g}\Big] \nonumber\\
			&~~~+ b_1\!\!\!\! \sum^{N - L - 1}_{j = \eta}\!\!\!\!\xi^{j}\sqrt{\eta(N - L - \eta)\bar{J}\bar{v}/\lambda_g} + b_1\!\!\!\! \sum^{q + \eta - 1}_{j = 0}\!\!\!\!\xi^{j} \bar{v}\nonumber\\
			& \stackrel{\triangle}{=} \beta_2(\bar{v},q)\label{eq:beta_2}
			\end{align}
			where the function $\beta_2(\bar{v},q)$ is a $\mathcal{KL}$-function.

		Finally, combining \eqref{eq:ybar} and \eqref{eq:beta_2}, we deduce that
			\begin{align*}
			&\Vert e_{y,s_r + q}\Vert  = \Vert y_{s_r + q} - \bar{y}^*_q(s_r)\big\Vert\\
			&\le \big\Vert y_{s_r + q} - I_{[q,q + n_y - 1]}H_{L + \eta}(\hat{y}^s_N)g^*(s_{r})\big\Vert\\
			&~~+ \big\Vert  I_{[q,q + n_y - 1]}H_{L + \eta}(\hat{y}^s_N)g^*(s_{r}) - \bar{y}^*_q(s_r) \big\Vert\\
			&\le \beta_2(\bar{v},q) + \big\Vert I_{[q,q + n_y - 1]}\big[- \Upsilon_{L + \eta}(I)
			H_{L + \eta}(w^s_N) g^*(s_r) \nonumber\\
			&~~ +\!h^*\!(s_r) \!-\!\Theta_{L + \eta}W^sg^*\!(s_r)\!\big]\!\big\Vert\\
			& \le \beta_2(\bar{v},q) \!+\!\! \sqrt{\frac{\bar{J}\bar{v}}{\lambda_h}} + b_1\!\!\!\! \sum^{N - 1}_{j = \eta + L}\!\!\!\!\xi^{j}\sqrt{\frac{(N \!-\! L\! - \eta)\bar{J}\bar{v}}{\lambda_g}} \\
			&~~+ \Vert \Upsilon_{L + \eta}(I)\Vert \sqrt{\!{(L\! +\! \eta)(N \!- \!L \!- \eta + 1)\bar{J}\bar{v}}/{\lambda_g}}\\
			& \le \beta_1(\bar{v},q)
			\end{align*}
			holds for all $q \!\in\! \{0, \cdots\!, L - 1\}$, where the $\mathcal{KL}$-function $\beta_1(\bar{v},q)$ is defined in \eqref{eq:beta_1}.
			
			So far, we have bounded the error between the predicted output by using the controller in Algorithm \ref{alg:mpc} and the actual output for the case of $L \ge T$.
			When $L < T$, it follows from Algorithm \ref{alg:mpc} that zero inputs will be used once the $L$ predicted control inputs are run out.
			Notice from \eqref{eq:mpc3} that the last $\eta$ predicted inputs are all zeros too.
			Hence, the established bound in \eqref{eq:boundey} still holds for all $t \in [s_r + L, s_{r+1})$. Details of the proof for the case of $L < T$ are thus omitted.
	\end{proof}
	\begin{remark}[{\emph{Trade-off between level of noise and DoS}}]\label{rmk:tradeoff}
		According to \eqref{eq:beta_1},
		the error between the estimated output $\bar{y}^*_{q}(s_r)$ and the actual output $y_{s_r + q}$ increases as $q$ grows, and reset to a constant related to the noise bound at every successful transmission (i.e., $q = 0$).
		In addition, Lemma \ref{lem:dos} indicates stronger DoS attacks lead to longer $T \ge s_{r + 1} - s_{r}$, and hence larger $q$.
		Therefore, it suggests that, to stabilize systems under stronger DoS attacks, the level of noise $w_t$ and $n_t$ should be smaller.
		Therefore, there exists a trade-off between robustness against noise as well as resilience against DoS attacks.
	\end{remark}
	
	Having bounded the prediction error, we analyze the closed-loop stability of system \eqref{eq:sys} along with the proposed controller in Algorithm \ref{alg:mpc}. To this end, let us begin by constructing a Lyapunov function.
	Based on the observability of state $x$ and Definition \ref{pro:uioss}, there exists a positive definite matrix $P = P^\top \succ 0$ such that $W(z) = z^\top P z$ is a UIOSS-Lyapunov function.
	Consider the following Lyapunov function accounting for both the data-driven MPC solution and the augmented system state
	\begin{equation}\label{eq:lapfunction}
	V_t := J^*_L(\tilde{z}_t) + \gamma W(z_t)
	\end{equation}
	where $\gamma>0$ is any positive constant. 
	The next lemma provides lower and upper bounds for this Lyapunov function in terms of the augmented state $z_t$.

	\begin{lemma}[{\emph{Bounds on the Lyapunov function}}]\label{lem:boundV}
		Consider the system \eqref{eq:sys} with the controller in Algorithm \eqref{alg:mpc}.
		Suppose that Assumptions \ref{as:ctrl}--\ref{as:horizon} are met.
		If the parameters of DoS attacks satisfying  \eqref{eq:doscondition},
		then there exist constant $\delta > 0$ such that for all $z_{s_r} \in \mathbb{B}_{\delta}$, Problem \eqref{eq:mpc} is feasible, and the Lyapunov function $V_{s_r}$ is bounded for $r \in \mathbb{N}_{0}$ as follows
			\begin{equation}\label{eq:boundV}
			\gamma\underline{\lambda}_{P}\Vert z_{s_r}\Vert^2 \le V_{s_r} \le \gamma_3 \Vert z_{s_r}\Vert^2 + \alpha_3(\bar{v})
			\end{equation}
			where $\gamma_3$ is a constant, and $\alpha_3(\bar{v})$ is a $\mathcal{K}_{\infty}$-function.
	\end{lemma}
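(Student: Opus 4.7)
The plan is to establish the two sides of \eqref{eq:boundV} separately. The lower bound comes essentially for free, whereas the upper bound and feasibility require constructing an explicit candidate for the MPC problem \eqref{eq:mpc} and bounding its cost. For the lower bound, observe that $J^*_L(\tilde z_{s_r})\ge 0$ since every term in the objective of \eqref{eq:mpc} is a weighted squared norm with strictly positive weights. Combining this with $W(z_{s_r}) = z_{s_r}^\top P z_{s_r} \ge \underline{\lambda}_{P}\Vert z_{s_r}\Vert^2$ and the definition \eqref{eq:lapfunction} immediately yields $V_{s_r}\ge \gamma\underline{\lambda}_{P}\Vert z_{s_r}\Vert^2$.

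To construct the candidate tuple $(g^c,h^c,\bar u^c,\bar y^c)$, I would first use observability (Assumption \ref{as:ctrl}) to back-solve a nominal initial state $\hat x^c := \Theta_{\eta}^{\dag}\bigl(y_{[s_r-\eta,s_r-1]} - \Upsilon_{\eta}(B) u_{[s_r-\eta,s_r-1]}\bigr)$, which depends linearly on $z_{s_r}$, hence $\Vert \hat x^c\Vert \le c_x\Vert z_{s_r}\Vert$ for some $c_x>0$. Since $(A,B)$ is stabilizable (equivalently, controllable on its reachable subspace), I can select an input sequence $u^c_{[0,n_x-1]}$ of norm linearly bounded by $\Vert \hat x^c\Vert$ that drives the ideal dynamics \eqref{eq:idealsys} from $\hat x^c$ to the origin within $n_x$ steps, and pad the remaining $L-n_x\ge \eta$ entries with zeros. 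Concatenating this with $u_{[s_r-\eta,s_r-1]}$ defines $\bar u^c$, which satisfies \eqref{eq:mpc2} by construction and, since its last $\eta$ entries together with the associated ideal state are all zero, also the terminal constraint \eqref{eq:mpc3}.

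Assumption \ref{as:multitre} and Lemma \ref{lem:fundamental}, applied to the ideal system \eqref{eq:idealsys}, then produce a minimum-norm vector $g^c$ satisfying $H_{L+\eta}(\hat u^s_N;\hat y^s_N)\,g^c = (\bar u^c;\hat y^c)$, where $\hat y^c$ is the noise-free output response from $\hat x^c$ under $\bar u^c$; the full column-rank property of this ideal Hankel matrix (a consequence of persistency of excitation) yields $\Vert g^c\Vert \le c_g\Vert z_{s_r}\Vert$. I would then set $\bar y^c$ to agree with $\hat y^c$ except in the initial block, which must equal $\zeta_{[s_r-\eta,s_r-1]}$ by \eqref{eq:mpc2}, and define $h^c := H_{L+\eta}(y^s_N)\,g^c - \bar y^c$, as forced by \eqref{eq:mpc1}. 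Invoking the decomposition \eqref{eq:Huyuxw_22} together with the construction of $\hat x^c$ (which ensures the initial $\eta$-block of $\hat y^c$ coincides with $y_{[s_r-\eta,s_r-1]}$), each block of $h^c$ is either a noise-weighted combination of entries of $w^s_N$ scaled by $g^c$, or equals $-n_{[s_r-\eta,s_r-1]}$ plus such a term, giving $\Vert h^c\Vert^2 \le c_h\bar v^2(1+\Vert z_{s_r}\Vert^2)$.

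Finally, the input constraint $\bar u^c_i\in\mathbb{U}$ holds once $\Vert z_{s_r}\Vert$ is sufficiently small, which provides the required $\delta>0$ and concludes feasibility. Evaluating the MPC objective on $(g^c,h^c,\bar u^c,\bar y^c)$ yields $J^*_L(\tilde z_{s_r})\le c_J\Vert z_{s_r}\Vert^2 + \alpha_3(\bar v)$ for a constant $c_J$ and a $\mathcal{K}_\infty$-function $\alpha_3$; combining with $\gamma W(z_{s_r}) \le \gamma\overline{\lambda}_{P}\Vert z_{s_r}\Vert^2$ delivers the upper bound with $\gamma_3 = c_J+\gamma\overline{\lambda}_{P}$. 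The hardest step is the bookkeeping in the previous paragraph: obtaining a clean linear-in-$\Vert z_{s_r}\Vert$ bound on $\Vert g^c\Vert$ by routing everything through the ideal Hankel matrix, and then showing that the noise-induced correction terms leaking into $h^c$ keep the penalty $(\lambda_h/\bar v)\Vert h^c\Vert^2$ of order $\bar v$ rather than $1/\bar v$, which is precisely why the scalings $\lambda_g\bar v$ and $\lambda_h/\bar v$ were chosen in \eqref{eq:mpc}.
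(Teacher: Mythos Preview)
Your proposal is correct and follows essentially the same route as the paper: the lower bound is immediate from $J_L^*\ge 0$, and the upper bound comes from constructing a deadbeat candidate trajectory for the ideal system, reading off $g$ from the (right-invertible) data matrix, and absorbing all noise-induced mismatches into $h$ so that $(\lambda_h/\bar v)\Vert h\Vert^2$ and $\lambda_g\bar v\Vert g\Vert^2$ are both $O(\bar v)$ in the free term and $O(1)$ in the $\Vert z_{s_r}\Vert^2$ coefficient. The paper's only cosmetic difference is that it parametrizes $g$ through the input--state matrix $H_{ux}=[H_{L+\eta}(u^s);H_1(\hat x^s)]$ and the true $x_{s_r-\eta}$ rather than through your reconstructed $\hat x^c$ and the ideal input--output Hankel matrix; via the factorization $[H(u^s);H(\hat y^s)]=\Psi_{L+\eta}H_{ux}$ these are equivalent.

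Two small slips to clean up: (i) the relevant rank property is \emph{full row rank} of $H_{ux}$ (equivalently of the ideal Hankel pair), which yields a bounded right pseudo-inverse and hence $\Vert g^c\Vert\le c_g\Vert z_{s_r}\Vert$; calling it ``full column rank'' is the wrong direction. (ii) Your claim that the initial $\eta$-block of $\hat y^c$ coincides with $y_{[s_r-\eta,s_r-1]}$ is only exact when $\Theta_\eta$ is square; in general $\hat y^c_{[-\eta,-1]}-y_{[s_r-\eta,s_r-1]}=-(I-\Theta_\eta\Theta_\eta^{\dag})\Upsilon_\eta(I)w_{[s_r-\eta,s_r-1]}$, an $O(\bar v)$ residual that simply adds to $h^c$ alongside $-n_{[s_r-\eta,s_r-1]}$ and does not disturb your bound $\Vert h^c\Vert^2\le c_h\bar v^2(1+\Vert z_{s_r}\Vert^2)$.
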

	
	\begin{proof}
			Consider an arbitrary $s_r \in\mathbb{N}_0$ with $r \in \mathbb{N}_{0}$.
			Since $J^*_L(\tilde{z}_{s_r}) >0$, it follows from \eqref{eq:lapfunction} that 
			$V_{s_r} \ge \gamma W(z_{s_r})\ge  \gamma \underline{\lambda}_{P}\Vert z_{s_r}\Vert^2$.
			It remains to establish the upper bound in \eqref{eq:boundV}.
			
			To this end, let us start by constructing a candidate solution $(\bar{g}(s_r), \bar{h}(s_r), \bar{u}(s_r), \bar{y}(s_r))$ of Problem \eqref{eq:mpc}, in which, the input sequence is expected to bring the  state $x$ (and its corresponding output $y$) of system \eqref{eq:sys} to a ball around the origin whose size depends on the magnitude of noise $\bar{v}$ in $L$ steps.
			In the following, we begin by choosing $\bar{u}(s_r)$ and $\bar{y}(s_r)$, which is followed by proving the existence of $\bar{g}(s_r)$ and $\bar{h}(s_r)$ such that constraints in \eqref{eq:mpc} are satisfied.
			
			According to \eqref{eq:mpc2}, it holds that $\bar{u}_{[-\eta, -1]}(s_r) = u_{[s_r - \eta, s_r - 1]}$ and $\bar{y}_{[-\eta, -1]}(s_r) = {\zeta}_{[s_r - \eta, s_r - 1]}$.
			Capitalizing on the system stabilizability, for the ideal system \eqref{eq:idealsys}, there exist a constant $\delta > 0$ and an input sequence $u_{[s_r, s_r + L- 1]} \in \mathbb{U}^L$, such that for all $x_{s_r}$ satisfying $\Vert x_{s_r}\Vert/\gamma_z \le \Vert z_{s_r}\Vert \le \delta$, this input sequence can bring the state
			$x_{[s_r, s_r + L - 1]}$ 
			and associated output
			$\hat{y}_{[s_r, s_r + L - 1]}$ 
			to the origin in $L - \eta$ steps (due to constraint \eqref{eq:mpc3}, there are actually at most $L - \eta$ non-zero inputs in the solution $\bar{u}_{[s_r, s_r + L - 1]}$), while obeying
			\begin{equation}\label{eq:gammauy}
			\left\Vert
			\left[
			\begin{matrix}
			u_{[s_r, s_r + L - 1]}\\
			\hat{y}_{[s_r, s_r + L - 1]}
			\end{matrix} \right]
			\right\Vert^2
			\le \gamma_{uy}^2\Vert x_{s_r}\Vert^2.
			\end{equation}	
			Hence, we choose $\bar{u}_{[0, L - 1]}(s_r) = u_{[s_r, s_r + L - 1]}$ and $\bar{y}_{[0, L - 1]}(s_r) = \hat{y}_{[s_r, s_r + L - 1]}$.
			Next, we prove that there exists a vector $g(s_r)$ such that \eqref{eq:mpc1} is satisfied.
			
			Set  
			\begin{equation*}
			\bar{g}(s_r) = H^{\ddag}_{ux}
			\left[
			\begin{matrix}
			u_{[s_r - \eta, s_r + L - 1]}\\
			x_{s_r - \eta}
			\end{matrix}
			\right]
			\end{equation*}
			with matrix $H_{ux}$ defined by
			\begin{align*}\label{eq:Hux}
			&H_{ux} := \left[
			\begin{matrix}
			H_{L + \eta}(u^s_N)\\
			H_{1}(\hat{x}^s_{N - L - \eta + 1 })
			\end{matrix}
			\right].
			\end{align*}	 
			Based on \eqref{eq:Huyuxw_ideal}, 
			equation \eqref{eq:mpc1} can be rewritten as
			\begin{subequations}
				\begin{align}
				&\left[\!
				\begin{matrix}
				{H}_{L + \eta}(u^s_N)\\
				{H}_{L + \eta}(y^s_N)
				\end{matrix}
				\!\right] \!\bar{g}(s_r)  = \Psi_{L + \eta}H_{ux} H^{\ddag}_{ux}
				\left[\!
				\begin{matrix}
				u_{[s_r - \eta, s_r + L - 1]}\\
				x_{s_r - \eta}
				\end{matrix}
				\!\right] \\
				&~~~+ 
				\bigg(\!\!\Psi_{L + \eta}\! 
				\left[\!\begin{matrix}
				0\\
				W^s
				\end{matrix}\!\right]\!+\! \left[\!\begin{matrix}
				0\\
				\Upsilon_{L + \eta}
				\end{matrix}\!\right]\!H_{L + \eta}(w^s)\!\!\bigg)\bar{g}(s_r)\nonumber\\
				& = \left[\!
				\begin{matrix}
				u_{[s_r - \eta, s_r + L - 1]}\\
				\hat{y}_{[s_r - \eta, s_r + L - 1]}\end{matrix}\!
				\right] \!\!+ \!
				\bigg(\!\!\Psi_{L + \eta}\! 
				\left[\!\begin{matrix}
				0\\
				W^s
				\end{matrix}\!\right]\!+\! \left[\!\begin{matrix}
				0\\
				\Upsilon_{L + \eta}
				\end{matrix}\!\right]\!H_{L + \eta}(w^s)\!\!\bigg)\bar{g}(s_r)\nonumber\\
				& = \left[
				\begin{matrix}
				u_{[s_r - \eta, s_r + L - 1]}\\
				\zeta_{[s_r - \eta, s_r - 1]}\\
				\hat{y}_{[s_r, s_r + L - 1]}
				\end{matrix}
				\right]
				\!+ \!
				\left[
				\begin{matrix}
				0\\
				\bar{h}_{[- \eta, - 1]}(s_r)\\
				\bar{h}_{[0, L - 1]}(s_r)
				\end{matrix}
				\right].
				\end{align} 
			\end{subequations}
			Combining \eqref{eq:sys} and \eqref{eq:idealsys} yields
			\begin{align*}
			\zeta_{[s_r - \eta, s_r - 1]} & = \hat{y}_{[s_r - \eta, s_r - 1]}\\
			&~~~ +
			\Upsilon_{\eta}(I)w_{[s_r - \eta, s_r - 1]}  + n_{[s_r - \eta, s_r - 1]}.
			\end{align*} 
			Therefore, we choose the candidate $\bar{h}(s_r)$ as follows
			\begin{subequations}\label{eq:hbar}
				\begin{align}
				\bar{h}_{[-\eta, -1]}(s_r) &\!=\! [I, 0]\!\big(\Theta_{L + \eta} W^s - \Upsilon_{\eta}(I)w_{s_r - \eta, s_r - 1} \\
				&~~+ \Upsilon_{L + \eta}(I)H_{L + \eta}(w^s)\big)\bar{g}(s_r) - n_{[s_r - \eta, s_r - 1]}\nonumber\\
				\bar{h}_{[0, L - 1]}(s_r) & \!=\! [0, I]\!\big(\Theta_{L + \eta} W^s + \Upsilon_{L + \eta}(I)H_{L + \eta}(w^s)\big)\bar{g}(s_r).
				\end{align}
		\end{subequations}
		In addition, \eqref{eq:hbar} indicates that 
			\begin{align}\label{eq:hbargbar}
			\Vert \bar{h}(s_r)\Vert 
			& \le \bar{v}\Big[\Big(b_1\!\!\! \!\sum_{j = L + \eta}^{N -1}\!\!\!\!\xi^j \sqrt{(L + \eta)(N - L - \eta)} \nonumber\\
			&~~~+ \|\Upsilon_{L + \eta}(I)\|\sqrt{(L + \eta)(N - L - \eta + 1)}\Big)\|\bar{g}(s_r)\| \nonumber\\
			&~~~ + (1 + \|\Upsilon_{\eta}(I)\|)\sqrt{\eta}\Big].
			\end{align}
			Noticing from \eqref{eq:mpc} that the coefficients for terms $\Vert \bar{h}(s_r)\Vert$ and $\Vert \bar{g}(s_r)\Vert$ are $\lambda_h/\bar{v}$ and  $\lambda_g\bar{v}$, respectively.
			Therefore, to minimize the cost function $J^*_L$, the inequality \eqref{eq:hbargbar} is naturally guaranteed, and a candidate solution of Problem \eqref{eq:mpc} is constructed.
			
			Now, we are ready to upper bound $V_{s_r}$ in terms of $z_{s_r}$ and $\bar{v}$.
			Since 
			\begin{equation*}
			\left[
			\begin{matrix}
			u_{[s_r - \eta, s_r - 1]}\\
			y_{[s_r - \eta, s_r - 1]}
			\end{matrix}
			\right]  = \Psi_{\eta}
			\left[
			\begin{matrix}
			u_{[s_r - \eta, s_r - 1]}\\
			x_{s_r - \eta}
			\end{matrix}
			\right] + 
			\left[
			\begin{matrix}
			0\\
			\Upsilon_{\eta}(I)
			\end{matrix}
			\right]
			w_{[s_r - \eta, s_r - 1]}
			\end{equation*}	 
			from the observability of system \eqref{eq:sys} we have that
			\begin{equation}\label{eq:uxzw}
			\left[
			\begin{matrix}
			u_{[s_r - \eta, s_r - 1]}\\
			x_{s_r - \eta}
			\end{matrix}
			\right] = \Psi^{\dag}_{\eta}z_{s_r} - \Psi^{\dag}_{\eta}\left[
			\begin{matrix}
			0\\
			\Upsilon_{\eta}(I)
			\end{matrix}
			\right]
			w_{[s_r - \eta, s_r - 1]}.
			\end{equation}
			Therefore, $\|\bar{g}(s_r)\|$ satisfies
			\begin{align*}
			\|\bar{g}(s_r)\| & \le \|H_{ux}^{\ddag}\|^2 \bigg(\!\|u_{[s_r, s_r + L - 1]}\|^2 + \left\|\left[
			\begin{matrix}
			u_{[s_r - \eta, s_r - 1]}\\
			x_{s_r - \eta}
			\end{matrix}
			\right] \right\|^2\!\bigg) \nonumber\\
			& \le \|H_{ux}^{\ddag}\|^2 (\gamma_{uy}^2 \|x_{s_    r}\|^2 + 2\|\Psi^{\dag}_{\eta}\|^2\|z_{s_r}\|^2 \\
			&~~~+ 2\|\Theta^\dag_{\eta}\Upsilon_{\eta}(I)\|^2 \eta\bar{v}^2)\\
			& \le \gamma_1\|z_{s_r}\|^2 + \gamma_2 \bar{v}^2
			\end{align*}
			where $\gamma_1 := \|H_{ux}^\ddag\|^2(\gamma_{uy}^2 \gamma_z^2 + 2\|\Psi^{\dag}_{\eta}\|^2)$ and $\gamma_2 := 2\|H_{ux}^\ddag\|^2 \|\Theta^\dag_{\eta}\Upsilon_{\eta}(I)\|^2\eta$.
			The first inequality is derived using the definition of $2$-norm.
			Substituting \eqref{eq:gammauy} and \eqref{eq:uxzw} into the first inequality, we arrive at the second inequality.  
			It follows from \eqref{eq:hbargbar} that
			\begin{align*}
			\|\bar{h}(s_{r})\|^2 &\le 2\Bigg[2\Bigg(b_1\!\!\! \!\sum_{j = L + \eta}^{N -1}\!\!\!\!\xi^j\Bigg)^2 (L + \eta)(N - L - \eta) \\
			&~~~+ \!2 \Vert \Upsilon_{L + \eta}(I)\Vert^2(L  \!+\! \eta)\!(N \!-\! L \!-\! \eta \!+ \!1)\!\Bigg]\!\bar{v}^2 \Vert \bar{g}(s_{r})\Vert^2 \\
			&~~~+ \!2\Vert \Upsilon_{\eta}(I)\Vert^2 \eta\bar{v}^2 + 2 \eta\bar{v}^2\\
			& \le \alpha_1(\bar{v}^2)\gamma_1 \Vert z_{s_r}\Vert^2 + \alpha_2(\bar{v}^2)
			\end{align*}
			where $\alpha_1(\bar{v}^2)$ and $\alpha_2(\bar{v}^2)$ are $\mathcal{K}_{\infty}$-functions defined by $\alpha_1(\bar{v}^2) := 2[2(b_1\sum_{j = L + \eta}^{N -1}\xi^j)^2 (L + \eta)(N - L - \eta) + 2 \Vert \Upsilon_{L + \eta}(I)\Vert^2(L  + \eta)(N - L - \eta - 1)]\bar{v}^2$ and $\alpha_2(\bar{v}^2) := \alpha_1(\bar{v}^2)\bar{v}^2 \gamma_2 + 2\Vert \Upsilon_{\eta}(I)\Vert^2 \eta\bar{v}^2 + 2 \eta\bar{v}^2$.
			Recalling $V_t := J^*_L(\tilde{z}_t) + \gamma W(z_t)$, it can be deduced that
			\begin{align*}
			V_{s_r} &\le \overline{\lambda}_{\{R_1, R_2\}} \gamma_{uy}^2\gamma_{z}^2\Vert z_{s_r}\Vert^2 +  \gamma\overline{\lambda}_{P}\Vert z_{s_r}\Vert^2 + \lambda_g \bar{v} (\gamma_1 \Vert z_{s_r}\Vert^2 \\
			&~~~+ \gamma_2 \bar{v}^2) + (\lambda_h/\bar{v})(\alpha_1(\bar{v}^2) \gamma_1 \Vert z_{s_r}\Vert^2 + \alpha_2(\bar{v}^2))\\
			& \le \gamma_3 \Vert z_{s_r}\Vert^2 + \alpha_3(\bar{v})
			\end{align*}
			where $\gamma_3 := \overline{\lambda}_{\{R_1, R_2\}}\gamma_{uy}^2\gamma_{z}^2 + \gamma \overline{\lambda}_{P} + \lambda_g \gamma_1 \bar{v} + \lambda_h\gamma_1 \alpha_1(\bar{v}^2)/\bar{v}$, and $\alpha_3(\bar{v}) := \lambda_g \gamma_2\bar{v}^3 + (\lambda_h/\bar{v})\alpha_2(\bar{v}^2)$ is a $\mathcal{K}_{\infty}$-function.
	\end{proof}

	Now, based on Lemmas \ref{lem:bounde} and \ref{lem:boundV}, our stability result is presented as follows.
	\begin{theorem}\label{thm:isps}
		Consider the system \eqref{eq:sys} adopting the controller in Algorithm \ref{alg:mpc}, and there exist a constant $\delta >0$ such that  the initial condition $z_0 \!\in\! \mathbb{B}_{\delta}$.
			Let Assumptions \ref{as:ctrl}--\ref{as:horizon} hold.
			For any $V_{in} > 0$, there exist positive constants $\overline{\lambda}_g$, $\underline{\lambda}_g$, $\overline{\lambda}_h$, $\underline{\lambda}_h$, and $\bar{v}_0 >0$ such that for all $\underline{\lambda}_g \!\le\! \lambda_g \!\le\!\overline{\lambda}_g$, $\underline{\lambda}_h \!\le\! \lambda_h \!\le\!\overline{\lambda}_h$, and $\bar{v} \le \bar{v}_0$, 
		if i) the DoS attacks obey \eqref{eq:doscondition} in Lemma \ref{lem:dos}, ii) Problem \eqref{eq:mpc} is feasible at $s_{0}$, and the Lyapunov function satisfies $V_{s_0} \le V_{in}$, then the following statements hold true
		\begin{enumerate}
			\item Problem \eqref{eq:mpc} is feasible at any successful transmission instant $s_r$ with $ r \in \mathbb{N}_0$; and,
			\item System \eqref{eq:sys} is ISS under Algorithm \ref{alg:mpc}.
		\end{enumerate}
	\end{theorem}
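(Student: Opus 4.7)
The plan is to argue by induction on the index $r$ of successful transmissions, proving simultaneously that (i) Problem \eqref{eq:mpc} remains feasible at each $s_r$ with $z_{s_r}\in\mathbb{B}_\delta$ and $V_{s_r}\le V_{in}$, and (ii) the sequence $\{V_{s_r}\}$ satisfies a contraction inequality of the form $V_{s_{r+1}}\le \rho\,V_{s_r}+\alpha(\bar v)$ for some $\rho\in(0,1)$ and a class-$\mathcal{K}_\infty$ function $\alpha$. Together with Lemma \ref{lem:boundV} this two-sided sandwich of $V_{s_r}$ in terms of $\|z_{s_r}\|^2$ will yield ISS along the subsequence of transmission instants, which can then be extended to all $t$ via the UIOSS bound \eqref{eq:uiossx} over each interval $[s_r,s_{r+1})$ of length at most $T$.

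For feasibility, the base case $r=0$ is given. For the inductive step, I would recycle the candidate solution $(\bar g(s_r),\bar h(s_r),\bar u(s_r),\bar y(s_r))$ constructed in the proof of Lemma \ref{lem:boundV}: the stabilizing input sequence drives the ideal-system state to zero in $L-\eta$ steps, and $\bar h(s_r)$ absorbs the discrepancy induced by the process and network-induced noise through the Hankel identity \eqref{eq:Huyuxw_22}. Feasibility at $s_{r+1}$ then reduces to ensuring $\|z_{s_{r+1}}\|\le\delta$; this will follow once the contraction on $V$ is established, since Lemma \ref{lem:boundV} gives $\|z_{s_r}\|^2\le V_{s_r}/(\gamma\underline\lambda_P)$.

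For the Lyapunov decrease, I would proceed in three steps. First, use the UIOSS-Lyapunov inequality \eqref{eq:uiossV} telescoped over $q\in[0,s_{r+1}-s_r)$, so that $W(z_{s_{r+1}})-W(z_{s_r})$ is upper bounded by a negative quadratic term in the intervening $\|z\|$'s plus quadratic terms in the applied inputs $u_t$ and the actual outputs $y_t$. Second, decompose $y_t=\bar y^*_{t-s_r}(s_r)+e_{y,t}$ and invoke Lemma \ref{lem:bounde} to replace $\|y_t\|^2$ by a term dominated by the predicted output (hence by $J^*_L(\tilde z_{s_r})$) plus the $\mathcal{KL}$-bound $\beta_1(\bar v,q)$; similarly $\|u_t\|^2$ is controlled by $J^*_L$ because the controller in Algorithm \ref{alg:mpc} replays $\bar u^*(s_r)$ (or zero). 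Third, bound $J^*_L(\tilde z_{s_{r+1}})$ from above by the cost of the candidate solution of step 2, which by construction is proportional to $\|z_{s_{r+1}}\|^2+\alpha_2(\bar v)$ with the regularization coefficients making the $\|h\|^2$ and $\|g\|^2$ penalties small. Summing the two pieces and using $\|z_{s_{r+1}}\|^2\le c\,W(z_{s_{r+1}})$ produces the claimed contraction after choosing $\gamma$ large enough so that the UIOSS negative term dominates.

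The main obstacle is the simultaneous calibration of the four free parameters $\gamma,\lambda_g,\lambda_h,\bar v_0$, because they are coupled: $\gamma$ must be chosen large to overcome the cross-terms coming from $J^*_L$, but this also inflates the upper bound in \eqref{eq:boundV}, tightening the condition $V_{s_0}\le V_{in}$ and shrinking the admissible $\delta$. Moreover, $\lambda_g,\lambda_h$ must lie in a window $[\underline\lambda,\overline\lambda]$ that is simultaneously large enough to damp $\|g^*\|,\|h^*\|$ in Lemma \ref{lem:bounde} and small enough so that the candidate solution used in Lemma \ref{lem:boundV} still gives a tight upper bound, all while the noise level $\bar v$ is small enough that $\beta_1(\bar v,T)$ and $\alpha_3(\bar v)$ do not overwhelm the contraction. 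I expect the bulk of the work to be the arithmetic of extracting an explicit range $[\underline\lambda_g,\overline\lambda_g]\times[\underline\lambda_h,\overline\lambda_h]$ and threshold $\bar v_0$ that depends on $V_{in}$, the DoS bound $T$ from Lemma \ref{lem:dos}, the UIOSS constants $\sigma_1,\sigma_2$, and the Hankel data — essentially, verifying that the chain of inequalities $\rho(\gamma,\lambda_g,\lambda_h)<1$ can be solved non-vacuously.
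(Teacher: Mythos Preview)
Your outline has the right overall architecture (induction on $r$, Lyapunov decrease via UIOSS telescoping plus Lemma~\ref{lem:bounde}), but there is a genuine gap in the way you propose to bound $J^*_L(\tilde z_{s_{r+1}})$. You plan to recycle the \emph{fresh} deadbeat candidate from Lemma~\ref{lem:boundV}, which only yields $J^*_L(\tilde z_{s_{r+1}})\le c_1\|z_{s_{r+1}}\|^2+\alpha(\bar v)$. With this bound, the combined estimate becomes
\[
V_{s_{r+1}}-V_{s_r}\;\lesssim\; c_1\|z_{s_{r+1}}\|^2 \;-\;\gamma\sigma_1\|z_{[s_r,s_{r+1}-1]}\|^2\;+\;(\gamma c-1)\,J^*_L(\tilde z_{s_r})\;+\;\text{noise},
\]
where $c=\max\{\sigma_2,2\sigma_3\}/\underline\lambda_{\{R_1,R_2\}}$ comes from bounding the UIOSS cross terms $\sigma_2\|u\|^2+\sigma_3\|y\|^2$ by the optimal cost. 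The two requirements on $\gamma$ now conflict: to kill the $J^*_L$ term you need $\gamma\le 1/c$, while to absorb $c_1\|z_{s_{r+1}}\|^2$ into the left-hand side via $\|z_{s_{r+1}}\|^2\le W(z_{s_{r+1}})/\underline\lambda_P$ you need $\gamma>c_1/\underline\lambda_P$. Nothing in the hypotheses guarantees that these intervals overlap, so ``choosing $\gamma$ large'' does not close the argument.

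What the paper does instead is the standard MPC \emph{shifted-candidate} construction: at $s_{r+1}$ take $\bar u_q(s_{r+1})=\bar u^*_{q+s_{r+1}-s_r}(s_r)$ on the overlap, and only append a short deadbeat tail to restore the terminal constraint (with three cases according to $s_{r+1}-s_r$ versus $n_x$ and $L$). This produces the key inequality
\[
J^*_L(\tilde z_{s_{r+1}})\;\le\; J^*_L(\tilde z_{s_r})\;-\!\!\sum_{q=0}^{s_{r+1}-s_r-1}\!\!\ell\big(\bar u^*_q(s_r),\bar y^*_q(s_r)\big)\;+\;\text{noise},
\]
and the negative stage-cost sum is exactly what offsets $\gamma(\sigma_2\|u\|^2+2\sigma_3\|\bar y^*\|^2)$ for the \emph{specific} choice $\gamma=\underline\lambda_{\{R_1,R_2\}}/\max\{\sigma_2,2\sigma_3\}$, leaving only $-\gamma\sigma_1\|z\|^2$ plus $\mathcal K_\infty$-noise. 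Your Lemma~\ref{lem:boundV} candidate is still needed, but only to bound the cost of the appended tail (via \eqref{eq:gammauy}), not to upper-bound the whole $J^*_L(\tilde z_{s_{r+1}})$. Once you make this change, the calibration of $\gamma$ becomes a single fixed value rather than the conflicting ``large $\gamma$'' you describe, and the remaining work on $\lambda_g,\lambda_h,\bar v_0$ goes through as you anticipate.
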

	\begin{proof}
		We begin  by proving 1), which is also known as \emph{Recursive feasibility}.
		
		The proof consists of two steps.
			In the first step, we show that if Problem \eqref{eq:mpc} is feasible at $s_0$, it is feasible at $s_{1}$. 
			In the second step, we prove that the Lyapunov function decreases at every successful transmission time instant, which indicates the recursive feasibility of Problem \eqref{eq:mpc}.
		
		\textbf{Step 1}: First, we prove that Problem \eqref{eq:mpc} is feasible at $s_1$, by means of carefully constructing a candidate solution.
		Denote an optimal solution of the Problem \eqref{eq:mpc} at time instant $s_0$ by $(g^*(s_0), h^*(s_0), \bar{u}^*(s_0), \bar{y}^*(s_0))$.
		The true trajectory of system \eqref{eq:sys} within the time interval $[s_0, s_1)$ with initial condition $(u_{[s_0 - \eta, s_0 - 1]}, \zeta_{[s_0 - \eta, s_0 - 1]})$ and the input sequence $\bar{u}^*(s_0)$ is represented by $y_{[s_0, s_1]}$.
		Denote a candidate solution at $s_1$ by $(\bar{g}(s_1), \bar{h}(s_1), \bar{u}(s_1), \bar{y}_{s_1})$.
		The following proof is divided into three cases with respect to the relationship between $s_1 - s_0$ and $n_x$. 
		
		\textbf{Case 1:} $s_1 - s_0 < n_x$.
		
		Similar to the proof of Lemma \ref{lem:boundV}, we first construct $\bar{u}$, $\bar{y}$, and then $\bar{g}$ and $\bar{h}$ can be derived using the constraints in Problem \eqref{eq:mpc}.
		For $q \in [- \eta, L - \eta - n_x - 1]$, let $\bar{u}_{q}(s_1) = \bar{u}^*_{q + s_1 - s_0}(s_0)$.
		For $q \in [-\eta, -1]$, let $\bar{y}_q(s_1) = \zeta_{s_1 + q}$, and for $q \in [0, L - \eta - n_x - 1]$, let $\bar{y}_{q}(s_1) = y_{s_1 + q}$.
		Recalling from Lemma \ref{lem:bounde} that for $q \in [L - \eta - n_x, L - \eta - 1]$
		\begin{equation}
		\Vert y_{q}(s_1) - \bar{y}^*_{q + s_1 - s_0}(s_0)\Vert \le \beta_1(\bar{v}, q).
		\end{equation}
		In addition, since $x_{s_0}\!\in\! \mathbb{B}_{\delta/\gamma_z}$, and the constraint \eqref{eq:mpc3} $(\bar{y}^*_{[L - \eta, L - 1]}(s_0),\bar{u}^*_{[L - \eta, L - 1]}(s_0))$ equals to zero, it can be deduced that $\bar{y}^*_{[q + s_1 - s_0, q + s_1 - s_0 + n_x]}(s_0)$ is close to zero when $\bar{v}$ approaches to zero (if $\bar{y}^*_{[q + s_1 - s_0, q + s_1 - s_0 + n_x]}(s_0)$ is large, then constraint \eqref{eq:mpc3} is violated).
		Consequently, if $\bar{v} \rightarrow 0$, the output $y_{[L - \eta - n_x, L - \eta - 1]}(s_1)$ and its corresponding state $x_{L - \eta - n_x}(s_1)$ approach to zero.
		In this setting, since the system is stabilizable, similar to \eqref{eq:gammauy}, there exist an input trajectory $\bar{u}_{[L - \eta - n_x, L - \eta - 1]}(s_1)$ that brings the state and its corresponding output of the ideal system \eqref{eq:idealsys} $\hat{y}$ to zero while obeying 
		\begin{equation}
		\left\Vert
		\left[
		\begin{matrix}
		\bar{u}_{[L - \eta - n_x, L - \eta - 1]}(s_1)\\
		\hat{y}_{[s_1 + L - \eta - n_x, s_1 + L - \eta - 1]}
		\end{matrix} \right]
		\right\Vert^2
		\le \gamma_{uy}^2\Vert x_{s_1 + L - \eta - n_x}\Vert^2.
		\end{equation}
		Let $\bar{y}_q(s_1) = \hat{y}_{s_1 + q}$ for $q \in [L - \eta - n_x, L - \eta - 1]$.
		Moreover, due to constraint \eqref{eq:mpc3}, for $q \in [L - \eta, L - 1]$, we let $\bar{y}_{[L - \eta, L - 1]}(s_1)$ and	  $\bar{u}_{[L - \eta, L - 1]}(s_1)$ equal to zero.
		
		Based on this input-output pair, $\bar{g}(s_1)$ and $\bar{h}(s_1)$ can be chosen following the same step from that of Lemma \eqref{lem:boundV}, i.e.,
		\begin{equation}
		\bar{g}(s_1) = H^{\ddag}_{ux}
		\left[
		\begin{matrix}
		u_{[s_1 - \eta, s_1 + L - 1]}\\
		x_{s_1 - \eta}
		\end{matrix}
		\right].
		\end{equation}
		and 
		\begin{align*}
		&\bar{h}_{[-\eta, -1]}(s_1) \!=\! -\Upsilon_{\eta}w_{[s_1 - \eta, s_1 - 1]} - n_{[s_1 - \eta, s_1 - 1]} - \Psi^{w} \bar{g}(s_1)\\
		&\bar{h}_{[0, L - \eta - n_x - 1]}(s_1) = -\Upsilon_{L - \eta - n_x}w_{[s_1, s_1 + L - \eta - n_x - 1]} - \Psi^{w} \bar{g}(s_1)\\
		&\bar{h}_{[L - \eta - n_x, L - 1]}(s_1)=- \Psi^{w} \bar{g}(s_1)  + \Theta_{n_x}\!\!\!\!\!\sum_{i = 0}^{L - n_x - 1}\!\!\!\! A^i w_{s_1 + L - \eta - n_x - i - 1}
		\end{align*}
		where
		\begin{equation*}
		\Psi^{w} := \bigg(\!\!\Psi_{L + \eta}\! 
		\left[\!\begin{matrix}
		0\\
		W^s
		\end{matrix}\!\right]\!+\! \left[\!\begin{matrix}
		0\\
		\Upsilon_{L + \eta}
		\end{matrix}\!\right]\!H_{L + \eta}(w^s_N)\!\!\bigg)
		\end{equation*}
		
		\textbf{Case 2:} $n_x \le s_1 - s_0  < L$.
		
		For $q \in [-\eta, L - \eta - (s_1 - s_0) - 1]$, let $\bar{u}_q(s_1) = \bar{u}^*_{q + s_1 - s_0}(s_0)$.
		For $q \in [-\eta, -1]$, let $\bar{y}_q(s_1) = \zeta_{s_1 + q}$, and for $q \in [0, L - \eta - (s_1 - s_0) - 1]$, let $\bar{y}_q(s_1) = y_{s_1 + q}$.
		Since $s_1 - s_0 \ge n_x$, it follows from \eqref{eq:mpc3} that for $q \in [L - \eta - (s_1 - s_0), L - 1 - (s_1 - s_0)]$, $\bar{y}^*_{q + s_1 - s_0}(s_0) = 0$.
		Based on Lemma \ref{lem:bounde}
		\begin{equation}
		\Vert y_{q}(s_1)\Vert \le \beta_1(\bar{v},q)
		\end{equation}
		and hence the state $x_{s_0 + L - \eta}$ is close to zero when $\bar{v}$ approaches to zero.
		In this setting, similar to that of Case(1), there exist an input trajectory $\bar{u}_{[L - \eta - (s_1 - s_0)], L - \eta - 1]}(s_1)$ that brings the state and its corresponding output of the ideal system \eqref{eq:idealsys} $\hat{y}$ to zero while obeying 
		\begin{equation}\label{eq:gammauy2}
		\left\Vert\!
		\left[
		\begin{matrix}
		\bar{u}_{[L - \eta - (s_1 - s_0), L - \eta - 1]}(s_1)\\
		\hat{y}_{[s_0 + L - \eta, s_1 + L - \eta - 1]}
		\end{matrix} \right]\!
		\right\Vert^2
		\le \gamma_{uy}^2\Vert x_{s_0 + L - \eta }\Vert^2.
		\end{equation}
		Let $\bar{y}_q(s_1) = \hat{y}_{s_1 + q}$ for $q \in [L - \eta - (s_1 - s_0), L - \eta - 1]$.
		Moreover, due to constraint \eqref{eq:mpc3}, for $q \in [L - \eta, L - 1]$, we let $\bar{y}_{[L - \eta, L - 1]}(s_1)$ and	  $\bar{y}_{[L - \eta, L - 1]}(s_1)$ equal to zero.
		
		Based on this input-output pair, $\bar{g}(s_1)$ and $\bar{h}(s_1)$ can be chosen as,
		\begin{equation}\label{eq:gbars1}
		\bar{g}(s_1) = H^{\ddag}_{ux}
		\left[
		\begin{matrix}
		u_{[s_1 - \eta, s_1 + L - 1]}\\
		x_{s_1 - \eta}
		\end{matrix}
		\right]
		\end{equation}
		and 
		\begin{subequations}
			\begin{align}
			&\bar{h}_{[-\eta, -1]}(s_1) \nonumber\\
			&~~~= -\Upsilon_{\eta}\!w_{[s_1 - \eta, s_1 - 1]} - n_{[s_1 - \eta, s_1 - 1]} - \Psi^{w} \bar{g}(s_1)\label{eq:hbar1}\\
			&\bar{h}_{[0, L - \eta - (s_1 - s_0) - 1]}(s_1) \nonumber\\
			&~~~=- \Psi^{w} \bar{g}(s_1)-\Upsilon_{L - \eta - (s_1 - s_0)}w_{[s_1, s_0  + L - \eta - 1]} \label{eq:hbar12}\\
			&\bar{h}_{[L - \eta - (s_1 - s_0), L - 1]}(s_1)   \nonumber\\
			&~~~=- \Psi^{w} \bar{g}(s_1)+\Theta_{(s_1 - s_0)}\!\!\!\!\!\!\!\!\!\!\!\sum_{i = 0}^{L - (s_1 - s_0) - 1} \!\!\!\!\!\!\!\!\!\!\!A^i w_{s_0 + L - \eta - i - 1}\label{eq:hbar3}.
			\end{align}	
		\end{subequations}
		
		\textbf{Case 3:} $s_1 - s_0 \ge L$.
		
		According to Lemma \ref{lem:bounde}, the error bound between the predicted output and the actual output still holds true when $s_1 - s_0 \ge L$.
		Therefore, a candidate solution can be constructed follows the same step as in the Case (2) and is omitted here.
		
		\textbf{Step 2:} Next, leveraging the candidate solution constructed above, the recursive feasibility is proved by showing that the Lyapunov function decreases at every successful transmission instant. 
		For simplicity, we only prove for Case (2), and the proof for Case (1) and (3) carry through.
		Suppose that $\bar{v}_0$ is sufficiently small such that candidate solution in Step 1 can be constructed.
		The optimal cost obeys
		\begin{align}
		&J^*_L(\tilde{z}(s_1))
		\le J_{L}(\tilde{z}(s_1), \bar{g}(s_1), \bar{h}(s_1))\\
		& = \bigg( J^*_L(\tilde{z}(s_0)) - \sum_{q = 0}^{L - 1}\ell(\bar{u}_q^*(s_0), \bar{y}_q^*(s_0)) - \lambda_g \bar{v} \Vert g^*(s_0)\Vert^2\nonumber\\
		&~~~ - (\lambda_h/\bar{v}) \Vert h^*(s_0)\Vert^2 \bigg)+ \lambda_g \bar{v} \Vert \bar{g}(s_1)\Vert^2 + (\lambda_h/\bar{v}) \Vert \bar{h}(s_1)\Vert^2\nonumber\\
		&~~~+\sum_{q = 0}^{L - 1}\ell(\bar{u}_q(s_1), \bar{y}_q(s_1))\\
		& \le J^*_L(\tilde{z}(s_0)) + (\lambda_h/\bar{v}) \Vert \bar{h}(s_1)\Vert^2 + \lambda_g \bar{v} \Vert \bar{g}(s_1)\Vert^2 
		\nonumber\\
		&~~~+ \sum_{q = 0}^{L - 1}\big(\ell(\bar{u}_q(s_1), \bar{y}_q(s_1)) - \ell(\bar{u}_q^*(s_0), \bar{y}_q^*(s_0)) \big)\label{eq:ls1ls0}.
		\end{align}
		In addition, the last term in \eqref{eq:ls1ls0} can be decomposed by
		\begin{subequations}\label{eq:ll}
			\begin{align}
			&\sum_{q = 0}^{L - 1}\ell(\bar{u}_q^*(s_1), \bar{y}_q^*(s_1)) - \sum_{q = 0}^{L - 1}\ell(\bar{u}_q(s_0), \bar{y}_q(s_0)) \\
			& = - \!\!\!\!\!\!\sum_{q = 0}^{s_1 - s_0 - 1}\!\!\!\!\ell(\bar{u}_q(s_0), \bar{y}_q(s_0)) + \!\!\!\!\!\!\!\!\!\!\!\!\sum_{q = L - \eta - (s_1 - s_0)}^{L - \eta - 1}\!\!\!\!\!\!\!\!\!\!\ell(\bar{u}_q^*(s_1), \bar{y}_q^*(s_1))\label{eq:ll2}\\
			&~~~ +\!\!\!\!\!\!\!\!\!\!\!\!\!\!\!\! \sum_{q = 0}^{L - \eta - (s_1 - s_0) - 1}\!\!\!\!\!\!\!\!\!\!\!\!\!\!\!\big(\ell(\bar{u}_q(s_1), \bar{y}_q(s_1)) \!-\! \ell(\bar{u}_{q + s_1 - s_0}^*\!(s_0),\! \bar{y}_{q + s_1 - s_0}^*\!(s_0))\big)\label{eq:ll3}.
			\end{align}
		\end{subequations}
		For the second term in \eqref{eq:ll2}, since $\bar{y}_i(s_1)$ is constructed to be the output for the ideal system \eqref{eq:idealsys}, the observability of the system indicates that
		\begin{equation}\label{eq:xyhat}
		\Vert x_{s_0 + L - \eta}\Vert^2 \le \Vert \Theta_\eta^{\dag} \Vert^2 \Vert \hat{y}_{[s_0 + L - \eta, s_0 + L- 1]}\Vert^2.
		\end{equation} 
		In addition, noticing that $\bar{y}^{*}_{[L - \eta, L - 1]}(s_0) = 0$, similar to Lemma \ref{lem:bounde}, there exist a function $\alpha_4 \in \mathcal{K}_{\infty}$ such that
		\begin{equation}\label{eq:alpha4}
		\Vert \hat{y}_{[s_0 + L - \eta, s_0 + L- 1]}\Vert \le \alpha_4(\bar{v}).
		\end{equation}
		Therefore, leveraging \eqref{eq:gammauy2}, we arrive at
		\begin{align}\label{eq:l1}
		\sum_{q = L - \eta - (s_1 - s_0)}^{L - \eta - 1}\!\!\!\!\!\!\!\!\!\!\!\!\!\!\ell(\bar{u}_q(s_1), \bar{y}_q(s_1))\! \le\! \overline{\lambda}_{\{R_1, R_2\}}\gamma_{uy}^2\Vert \Theta_\eta^{\dag} \Vert^2 \alpha_4^2(\bar{v})
		\end{align}
		which approaches to zero as $\bar{v} \rightarrow 0$.
		
		Next, we derive an upper bound for the term in \eqref{eq:ll3}.
		Recalling that $\bar{u}^*_{q + s_1 - s_0}(s_0) = \bar{u}_{q}(s_1)$ for $q \in [0, L - \eta - (s_1 - s_0) - 1]$, one gets that
		\begin{align}
		\eqref{eq:ll3} =\!\!\!\!\!\!\!\!\!\!\!\! \sum_{q = 0}^{L - \eta - (s_1 - s_0) - 1}\!\!\!\!\!\!\!\!\!\!\!\!\big(\Vert \bar{y}_q(s_1) \Vert^2_{R_2}- \Vert \bar{y}_{q + s_1 - s_0}^*(s_0)\Vert^2_{R_2} \big).
		\end{align}
		In addition,
		\begin{align}
		&\Vert \bar{y}_q(s_1) \Vert^2_{R_2}- \Vert \bar{y}_{q + s_1 - s_0}^*(s_0)\Vert^2_{R_2}\nonumber\\ &\le \overline{\lambda}_{R_2} \big(\Vert \bar{y}_q(s_1) \Vert^2- \Vert \bar{y}_{q + s_1 - s_0}^*(s_0)\Vert^2\big)\nonumber\\
		&= \overline{\lambda}_{R_2} \big(\Vert \bar{y}_q(s_1) - \bar{y}_{q + s_1 - s_0}^*(s_0) + \bar{y}_{q + s_1 - s_0}^*(s_0) \Vert^2\nonumber\\
		&~~~- \Vert \bar{y}_{q + s_1 - s_0}^*(s_0)\Vert^2\big)\nonumber\\
		&\le \overline{\lambda}_{R_2} \big(\Vert \bar{y}_q(s_1) - \bar{y}_{q + s_1 - s_0}^*(s_0) \Vert^2 \nonumber\\
		&~~~+ 2\Vert \bar{y}_q(s_1) - \bar{y}_{q + s_1 - s_0}^*(s_0) \Vert \Vert \bar{y}_{q + s_1 - s_0}^*(s_0)\Vert\big)\nonumber\\
		& \le  \overline{\lambda}_{R_2} \big(\Vert \bar{y}_q(s_1) - \bar{y}_{q + s_1 - s_0}^*(s_0) \Vert^2 \nonumber\\
		&~~~+ 2\Vert \bar{y}_q(s_1) - \bar{y}_{q + s_1 - s_0}^*(s_0) \Vert (1 +  \Vert \bar{y}_{q + s_1 - s_0}^*(s_0)\Vert)\big)\nonumber\\
		& \le \overline{\lambda}_{R_2} \big(\Vert \bar{y}_q(s_1) - \bar{y}_{q + s_1 - s_0}^*(s_0) \Vert^2 \nonumber\\
		&~~~+ 2\Vert \bar{y}_q(s_1) - \bar{y}_{q + s_1 - s_0}^*(s_0) \Vert (1 +  V_{in})\big)\nonumber\\
		&\le \overline{\lambda}_{R_2} (\beta_1^2(\bar{v}, q+ s_1 - s_0) + 2 \beta_1(\bar{v}, q+ s_1 - s_0)(1 +  V_{in}))\nonumber\\
		&\stackrel{\triangle}{=} \beta_3(\bar{v},q)\label{eq:beta_3}.
		\end{align}
		Therefore, substituting \eqref{eq:l1} and \eqref{eq:beta_3} into \eqref{eq:ll}, we one has that
		\begin{align}
		& \sum_{q = 0}^{L - 1}\ell(\bar{u}_q(s_1), \bar{y}_q(s_1)) - \sum_{q = 0}^{L - 1}\ell(\bar{u}_q^*(s_0), \bar{y}_q^*(s_0)) \\
		&\le- \!\!\!\!\sum_{q = 0}^{s_1 - s_0 - 1}\!\!\!\!\ell(\bar{u}_q(s_0), \bar{y}_q(s_0)) + \overline{\lambda}_{\{R_1, R_2\}}\gamma_{uy}^2\Vert \Theta_\eta^{\dag} \Vert^2 \alpha_4(\bar{v})\nonumber\\
		&~~~ +\!\!\!\!\!\!\!\!\!\!\!\!\! \sum_{q = 0}^{L - \eta - (s_1 - s_0) - 1}\!\!\!\!\!\!\!\!\!\!\!\! \beta_3(\bar{v},q)\\
		& = - \!\!\!\!\sum_{q = 0}^{s_1 - s_0 - 1}\!\!\!\!\ell(\bar{u}_q(s_0), \bar{y}_q(s_0)) + \alpha_5(\bar{v})\label{eq:lls1}
		\end{align}
		where $\alpha_5(\bar{v}) := \overline{\lambda}_{\{R_1, R_2\}}\gamma_{uy}^2\Vert \Theta_\eta^{\dag} \Vert^2 \alpha_4(\bar{v}) + \sum_{q = 0}^{L - \eta - (s_1 - s_0) - 1} \beta_3(\bar{v},q)$ is a $\mathcal{K}_{\infty}$-function.
		
		Now, it remains to derive upper bounds for $\Vert \bar{g}(s_1)\Vert$ and $\Vert \bar{h}(s_1) \Vert$.
		From \eqref{eq:gbars1}, one gets that
		\begin{align}
		\Vert \bar{g}(s_1)\Vert^2 &= \Vert H_{ux}^{\ddag}\Vert^2\big(\Vert x_{s_1 - \eta}\Vert^2 + \Vert \bar{u}_{[s_1 - \eta, s_1 + L - 1]}\Vert^2\big)\\
		& = \Vert H_{ux}^{\ddag}\Vert^2\big(\Vert x_{s_1 - \eta}\Vert^2 \!+\! \Vert \bar{u}^*_{[s_1 - s_0  - \eta, L - \eta - 1](s_0)}
		\Vert^2 \nonumber\\
		&~~~+ \Vert \bar{u}_{[L - \eta - (s_1 - s_0), L - 1]}(s_1)\Vert^2\big)\label{eq:gbars121}\\
		& \le \Vert H_{ux}^{\ddag}\Vert^2\big(\Vert x_{s_1 - \eta}\Vert^2 \!+\! \Vert \bar{u}^*_{[s_1 - s_0  - \eta, L - \eta - 1](s_0)}
		\Vert^2\big) \nonumber\\
		&~~~+ \Vert H_{ux}^{\ddag}\Vert^2\gamma_{uy}^2\Vert x_{s_0 + L - \eta}\Vert^2\label{eq:gbars122}\\
		& \le \Vert H_{ux}^{\ddag}\Vert^2\big(\Vert x_{s_1 - \eta}\Vert^2 \!+\! \Vert \bar{u}^*_{[s_1 - s_0  - \eta, L - \eta - 1]}(s_0)
		\Vert^2\big) \nonumber\\
		&~~~+ \Vert H_{ux}^{\ddag}\Vert^2\gamma_{uy}^2\Vert \Theta_{\eta}\Vert^2\alpha_4^2(\bar{v})\label{eq:gbars123}
		\end{align}
		where \eqref{eq:gbars122} holds true because of \eqref{eq:gammauy2}, and \eqref{eq:gbars123} is obtained by substituting \eqref{eq:xyhat} and \eqref{eq:alpha4} into \eqref{eq:gbars122}.
		Based on \eqref{eq:hbar1}--\eqref{eq:hbar3}, and noticing that $s_1 - s_0 \le T$ from Lemma \ref{lem:dos}, one gets that
		\begin{align*}
		&(\lambda_h/\bar{v})\Vert \bar{h}(s_1)\Vert^2 \le(\lambda_h/\bar{v})\Bigg[ 2 \Vert \Upsilon_\eta\Vert^2\eta\bar{v}^2 + 2 \eta\bar{v}^2 \\
		&+ 2\Bigg(\!2\Bigg(b_1\!\!\! \!\sum_{j = L + \eta}^{N -1}\!\!\!\!\xi^j\Bigg)^2 (L + \eta)(N - L - \eta)\bar{v}^2 \nonumber\\
		&+ 2 \Vert\Upsilon_{L + \eta} \Vert^2(L + \eta)(N - L - \eta + 1)\bar{v}^2\Bigg)\Vert \bar{g}(s_1)\Vert^2 \\
		&+ 2\bigg\Vert\Theta_{T}\!\!\!\! \sum_{q = 0}^{L - n_x - 1}\!\!\!\!A^q\bigg\Vert^2\bar{v}^2 + 2\Vert\Upsilon_{L - \eta - n_x} \Vert^2(L - \eta - n_x)\bar{v}^2\Bigg],
		\end{align*}
		together with \eqref{eq:gbars123} yielding 
		\begin{align}
		&(\lambda_g\bar{v})\bar{g}(s_1) + (\lambda_h/\bar{v})\bar{h}(s_1)\le \bar{v}(\lambda_g + \gamma_4\lambda_h) \big(\Vert x_{s_1 - \eta}\Vert^2 \nonumber\\
		&+ \Vert \bar{u}^*_{[s_1 - s_0  - \eta, L - \eta - 1]}(s_0)
		\Vert^2\big) +\alpha_6(\bar{v})\label{eq:hbars1}
		\end{align}
		where $\gamma_4 := 2[2(b_1\sum_{j = L + \eta}^{N -1}\xi^j)^2 (L + \eta)(N - L - \eta) + 2 \Vert\Upsilon_{L + \eta} \Vert^2(L + \eta)(N - L - \eta + 1)]\Vert H_{ux}^{\ddag}\Vert^2$ and $\alpha_6$ is a $\mathcal{K}_{\infty}$-function.
		Combining \eqref{eq:lls1}, \eqref{eq:gbars123} and \eqref{eq:hbars1}, it can be obtained from \eqref{eq:ls1ls0} that
		\begin{align}
		&J^*_L(\tilde{z}(s_1)) - J^*_L(\tilde{z}(s_0)) \le- \!\!\!\!\!\sum_{q = 0}^{s_1 - s_0 - 1}\!\!\!\!\!\ell(\bar{u}_q(s_0), \bar{y}_q(s_0))+ \alpha_5(\bar{v})\nonumber\\
		& \! + \!\bar{v}(\lambda_g \!+\! \gamma_4\lambda_h) \big(\Vert x_{s_1 - \eta}\Vert^2 \!+\! \Vert \bar{u}^*_{[s_1 - s_0  - \eta, L - \eta - 1]}\!(s_0)
		\!\Vert^2\big) \!+\!\alpha_6(\bar{v})\nonumber.
		\end{align}
		Moreover, it follows from \eqref{eq:uiossV} that
		\begin{align*}
		&W(z_{s_1}) - W(z_{s_0}) \\
		&= \big(W(z_{s_1}) - W(z_{s_1 - 1})\big) + \cdots + \big(W(z_{s_0 + 1}) - W(z_{s_0})\big)\\
		&\le -\sigma_1\Vert z_{[s_0, s_1 - 1]}\Vert^2 + \sigma_2\Vert u_{[s_0, s_1 - 1]}\Vert^2 + \sigma_3\Vert y_{[s_0, s_1 - 1]}\Vert^2\\
		& \le -\sigma_1\Vert z_{[s_0, s_1 - 1]}\Vert^2 + \sigma_3(2\Vert\bar{y}^*_{[0,s_1 - s_0 - 1]}(s_0)\Vert^2\\
		&~~~ + 2\Vert y_{[s_0, s_1 - 1]} - \bar{y}^*_{[0,s_1 - s_0 - 1]}(s_0)\Vert^2)+ \sigma_2\Vert u_{[s_0, s_1 - 1]}\Vert^2 \\
		&\stackrel{\eqref{eq:beta_1}}{\le}\!\! -\sigma_1\Vert z_{[s_0, s_1 - 1]}\Vert^2 + \sigma_2\Vert u_{[s_0, s_1 - 1]}\Vert^2\\
		&~~~ + 2\sigma_3\Vert\bar{y}^*_{[0,s_1 - s_0 - 1]}(s_0)\Vert^2+ 2\sigma_3\sum_{q = 0}^{s_1 - 1}\beta_1^2(\bar{v}, q).
		\end{align*}
		Therefore, the Lyapunov function satisfies
		\begin{align*}
		&V_{s_1} \!-\! V_{s_0} = J^*_L(\tilde{z}(s_1)) - J^*_L(\tilde{z}(s_0)) + \gamma(W(z_{s_1}) - W(z_{s_0})) \\
		&\le - \!\!\!\!\!\sum_{q = 0}^{s_1 - s_0 - 1}\!\!\!\!\!\ell(\bar{u}_q(s_0), \bar{y}_q(s_0))+ \alpha_5(\bar{v})\!+\!\alpha_6(\bar{v})\nonumber\\
		& ~~~ + \bar{v}(\lambda_g \!+\! \gamma_4\lambda_h) \big(\Vert x_{s_1 - \eta}\Vert^2 \!+\! \Vert \bar{u}^*_{[s_1 - s_0  - \eta, L - \eta - 1]}\!(s_0)
		\!\Vert^2\big)  \\
		&~~~+ \gamma\big(-\sigma_1\Vert z_{[s_0, s_1 - 1]}\Vert^2 + \sigma_2\Vert u_{[s_0, s_1 - 1]}\Vert^2\\
		&~~~ + 2\sigma_3\Vert\bar{y}^*_{[0,s_1 - s_0 - 1]}(s_0)\Vert^2+ 2\sigma_3\sum_{q = 0}^{s_1 - 1}\beta_1^2(\bar{v}, q)\big)\\
		& \stackrel{\gamma = \frac{\overline{\lambda}{\{R_1, R_2\}}}{\sigma_2, 2\sigma_3}}{\le}\!\!\!\!\!\! \!\!\!\!\!\!  \underbrace{\alpha_5(\bar{v})\!+\!\alpha_6(\bar{v})\!+\! 2\gamma\sigma_3\sum_{q = 0}^{s_1 - 1}\beta_1^2(\bar{v}, q)}_{\stackrel{\triangle}{=}\alpha_7(\bar{v}) \in \mathcal{K}_{\infty}}\!-\Big(\frac{\gamma\sigma_1}{\gamma_z}\Big)\Vert x_{s_0}\Vert^2\nonumber\\
		& ~~~ + \big(\bar{v}(\lambda_g \!+\! \gamma_4\lambda_h) - (\gamma\sigma_1/\gamma_z)\big)\Vert x_{s_1 - \eta}\Vert^2 \\
		&~~~+\bar{v}(\lambda_g + \gamma_4\lambda_h) \Vert \bar{u}^*_{[s_1 - s_0  - \eta, L - \eta - 1]}\!(s_0)
		\!\Vert^2  \\
		&\le  \alpha_7(\bar{v})\! -\! (\gamma\sigma_1/\gamma_z)\Vert x_{s_0}\Vert^2 \!+\! \big(\bar{v}(\lambda_g \!+\! \gamma_4\lambda_h) \!-\! (\gamma\sigma_1/\gamma_z)\big)\\
		&~~~\times\bigg\Vert A^{s_1 - \eta - s_0} x_{s_0} + \!\!\!\!\!\!\!\!\!\!\sum_{q = 0}^{s_1 - s_0 - \eta - 1}\!\!\!\!\!\!\!\!\!\!A^qBu_{s_1 - \eta - q} +\!\!\!\!\!\!\!\!\!\!\sum_{q = 0}^{s_1 - \eta - s_0 - q}\!\!\!\!\!\!\!\!\!\!w_{s_1 - \eta - q}\bigg\Vert^2\\
		&~~~+\bar{v}(\lambda_g + \gamma_4\lambda_h) \Vert \bar{u}^*_{[s_1 - s_0  - \eta, L - \eta - 1]}\!(s_0)
		\!\Vert^2 \\
		&\stackrel{\bar{u} \in \mathbb{U}}{\le}\!\!\!\! -\Big(\frac{\gamma \sigma_1}{\gamma_z} \!-\! 2\big(\bar{v}(\lambda_g \!+\! \gamma_4\lambda_h) \!-\! \frac{\gamma \sigma_1}{\gamma_z}\big)\Vert A^{T - \eta}\Vert\!\Big)\Vert x_{s_0}\Vert^2 \!+\! \underbrace{\alpha_7(\bar{v})}\\
		&~~~ \underbrace{+ \bar{v}\bigg(\!\!4(\!\lambda_g \!+\! \gamma_4\lambda_h \!-\! \frac{\gamma \sigma_1}{\gamma_z}\!)\!\!\!\!\!\sum_{q = 0}^{T - \eta - 1}\!\!\!\!\!\Vert A^qB\Vert^2 \!+\! (\lambda_g \!+\! \gamma_4\lambda_h)\!\!\bigg)u_{\max}^2}_{\stackrel{\triangle}{=}\alpha_8(\bar{v}) \in \mathcal{K}_{\infty}}\\
		&\le -\gamma_5\Vert x_{s_0}\Vert^2 + \alpha_8(\bar{v})
		\end{align*}
		where $\gamma_5\! :=\!(\gamma \sigma_1/\gamma_z) \!-\! 2\big(\bar{v}(\lambda_g \!+\! \gamma_4\lambda_h) - (\gamma \sigma_1/\gamma_z)\big)\Vert A^{T - \eta}\Vert > 0$ holds true if $\bar{v}_0 \le ((2\Vert A^{T - \eta}\Vert + 1)\gamma \sigma_1/(2\Vert A^{T - \eta}\Vert(\lambda_g + \gamma_4\lambda_h)\gamma_z))$.
		Further, adopting \eqref{eq:boundV} in Lemma \ref{lem:boundV}, and choosing $\gamma_x >0$ such that $\gamma_x\Vert z_t\Vert \le \Vert x_t\Vert$, one gets that
		\begin{equation}\label{eq:vs1-vs0}
		V_{s_1} \!-\! V_{s_0} \le \!-(\gamma_5\gamma_x^2/\gamma_3)V_{s_0} \!+\! (\gamma_5\gamma_x^2/\gamma_3)\alpha_3(\bar{v}) \!+ \!\alpha_8(\bar{v})
		\end{equation}
		In conclusion, the Lyapunov function decreases at every successful transmission time instant, and the Problem \eqref{eq:mpc} is feasible at $s_r$ for all $r \in \mathbb{N}_0$.
		
		{\emph{s2) ISS}}
		
		Based on \eqref{eq:vs1-vs0}, it can be obtained recursively that
		\begin{equation*}
		V_{s_{r}} \!\le\! \bigg(\!\!1 - \frac{\gamma_5\gamma_x^2}{\gamma_3}\!\bigg)^{r + 1} \!\!\!\!\!\!\!\!\!V_{s_0} \!+\! \sum_{q = 0}^{r}\bigg(\!1 \!-\! \frac{\gamma_5\gamma_x^2}{\gamma_3}\!\bigg)^{q}\!\!\bigg(\!\frac{\gamma_5\gamma_x^2}{\gamma_3}\alpha_3(\bar{v}) \!+ \!\alpha_8(\bar{v})\!\!\bigg).
		\end{equation*}
		Finally, adopting \eqref{eq:boundV}, we arrive at
		\begin{align}
		\Vert z_{s_r} \Vert^2 \le \frac{\gamma_3}{\gamma\underline{\lambda}_{P}}\bigg(\!\!1 - \frac{\gamma_5\gamma_x^2}{\gamma_3}\!\bigg)^{r + 1} \!\!\!\!\!\!\!\!\!\Vert z_{s_0} \Vert^2 + \alpha_9(\bar{v})
		\end{align}
		with function $\alpha_9(\bar{v}) \in \mathcal{K}_{\infty}$, which completes the proof according to Definition \ref{def:iss}.
	\end{proof}
	\begin{remark}[{\emph{Global ISS}}]
		In Theorem \ref{thm:isps}, $z_0 \in \mathbb{B}_{\delta}$ is required to establish the recursive feasibility. 
		That is, system \eqref{eq:sys} achieves only local ISS under the proposed data-driven resilient controller in Algorithm \ref{alg:mpc}.
		To recover global ISS, we provide next two solutions, each of which comes at the price of either reduced resilience against DoS attacks or increased computational complexity.
		The first solution is to solve Problem \eqref{eq:mpc} once every $n_x$ time instants, and apply the control inputs in $u_{[t, t + n_x - 1]} = \bar{u}^*_{[0, n_x - 1]}(t)$ sequentially to the system over the ensuing $n_x$ steps.
		Under such a method, global recursive feasibility of Problem \eqref{eq:mpc} can be shown similarly to that of Theorem \ref{thm:isps}, and the closed-loop system achieves global ISS provided that condition $(1/\nu_f) + (1/\nu_d) < 1 - ((n_x - 1)/\nu_f)$ on DoS attacks is met.
		In addition, a data-driven MPC scheme without the terminal constraint \eqref{eq:mpc3} was recently proposed in \cite{berberich2021robust}, under which global ISS can be ensured if the prediction horizon  (i.e., $L$) is sufficiently large.
		Condition on $L$ implies that this scheme also requires sufficiently many pre-collected trajectories (i.e., large $N$).
		In this manner, system \eqref{eq:sys} can achieve maximum resilience against DoS attacks at the price of increased computational complexity.
		Stability analysis for these two schemes can be performed by leveraging the results in \cite{berberich2019data} and \cite{berberich2021robust}.
	\end{remark}
	\begin{remark}[{\emph{DoS attacks also in controller-to-plant channel}}]
		When both controller-to-plant and sensor-to-controller channels are subject to DoS attacks, a packetized transmission policy can be used in the input channel too.
		At each successful transmission time instant, the plant receives a packet consisting of $b$ inputs from the controller (i.e., $\bar{u}^*_{0,b - 1}(s_r)$), where $b$ is the buffer size at the plant side.
		Our proposed data-driven resilient controller along with the stability and robustness guarantees of this paper can be generalized to this case, and the trade-off between the buffer size $b$ and DoS attacks can be characterized by the decrease rate and the increase rate of the Lyapunov function (c.f. \cite{FengNetworked}).
	\end{remark}
	
	\section{Numerical Example}\label{sec:example}
	Consider the unstable batch reactor example in \cite{FengResilient} (originally studied in \cite{Walsh2002Scheduling}), where model-based resilient control was investigated under DoS attacks.
	This model is given by $\dot{x}(t) = A x(t) + B u(t)$ and $y = C x(t)$, with
	\begin{align*}
	&A :=
	\left[
	\begin{matrix}
	1.38 & -0.2077 & 6.715 & -5.676\\
	-0.5814 & -4.29 & 0 & 0.675\\
	1.067 & 4.273 & -6.654 & 5.893\\
	0.048 & 4.273 & 1.343 & -2.104
	\end{matrix}
	\right]\\
	&B := \left[
	\begin{matrix}
	0 & 0\\
	5.679 & 0\\
	1.136 & -3.146\\
	1.136 & 0
	\end{matrix}
	\right],~~~~
	C :=
	\left[
	\begin{matrix}
	1 & 0 & 1 & -1\\
	0 & 1 & 0 & 0
	\end{matrix}
	\right].
	\end{align*}
	
	Here, we consider a discrete-time version of the unstable batch reactor with a sampling period of $0.1s$, and evaluate the resilient performance of the proposed controller.
		To this end, a number of input-output trajectories of length $N = 100$ were obtained by means of simulating the open-loop system off-line using input sequences obeying Assumption \ref{as:multitre}.
		Based on Assumption \ref{as:horizon}, parameters of Algorithm \ref{alg:mpc} were set as follows: $L = 10 \ge n_x + 2\eta =  8$, $\lambda_g = 0.1$, $\lambda_h = 100$, $R_1 = 10^{-4} I_{2}$, and $R_2 = 3{\rm I_{2}}$.
		Over a simulation horizon of $200$ time instants, Figs. \ref{fig:y7}--\ref{fig:y85} depict the comparative performance of the  data-driven method and the model-based method under different levels of noise and DoS attacks (signified by the gray shades). 
		In addition, the identification step uses the same data  as the data-driven method.
		The red dashed line (output using the proposed controller) in Figs. \ref{fig:y7}--\ref{fig:y85} confirmed that there is a trade-off between system resilience against DoS attacks and robustness against noise, which aligns with our observations in Remark \ref{rmk:tradeoff}.
		Compared with the data-riven method, the model-based method either using the controller in \cite{FengResilient} or standard model predictive controller yields poorer system performance, which resembles the results in \cite{krishnan2021On}.
		This is because the identification step tend to over-fit the noise in the pre-collected data when system dimension is high (e.g., $n_x \ge 4$).
		
		Furthermore, relationship between system performance, prediction horizon $L$, and length of the pre-collected data $N$ is discussed in Fig. \ref{fig:ynormNL}.
		It was shown in \cite{persis2020data} that the length of pre-collected data should obey $N \ge (\max\{n_u, n_y\} + 1)L + n_x -1 = 33$.
		The top panel of Fig. \ref{fig:ynormNL} illustrates that when $N$ is small (i.e., $33< N \le 60$), system performance improves fast as $N$ increases; and when $N$ becomes large, system performance remains almost steady as $N$ increases.
		The bottom panel of Fig. \ref{fig:ynormNL} reports system performance for different $L$ values.
		According to Assumption \ref{as:horizon}, the predicted horizon obeys $2\eta + n_x \le L \le (N - n_x + 1)/(\max\{n_u, n_y\} + 1)$.
		Therefore, when $N = 40$, one gets that $8 \le L \le 12$.
		It has been shown that, as long as the $L$ and $N$ are enough for evaluating Problem \eqref{eq:mpc}, they have little effect on the system performance.
		Finally, the bottom panel of Fig. \ref{fig:ynormNL} implies that when the dataset is small, the data-driven control method achieves better performance than using the model-based one.
	\begin{figure}
		\centering
		\includegraphics[width=9cm]{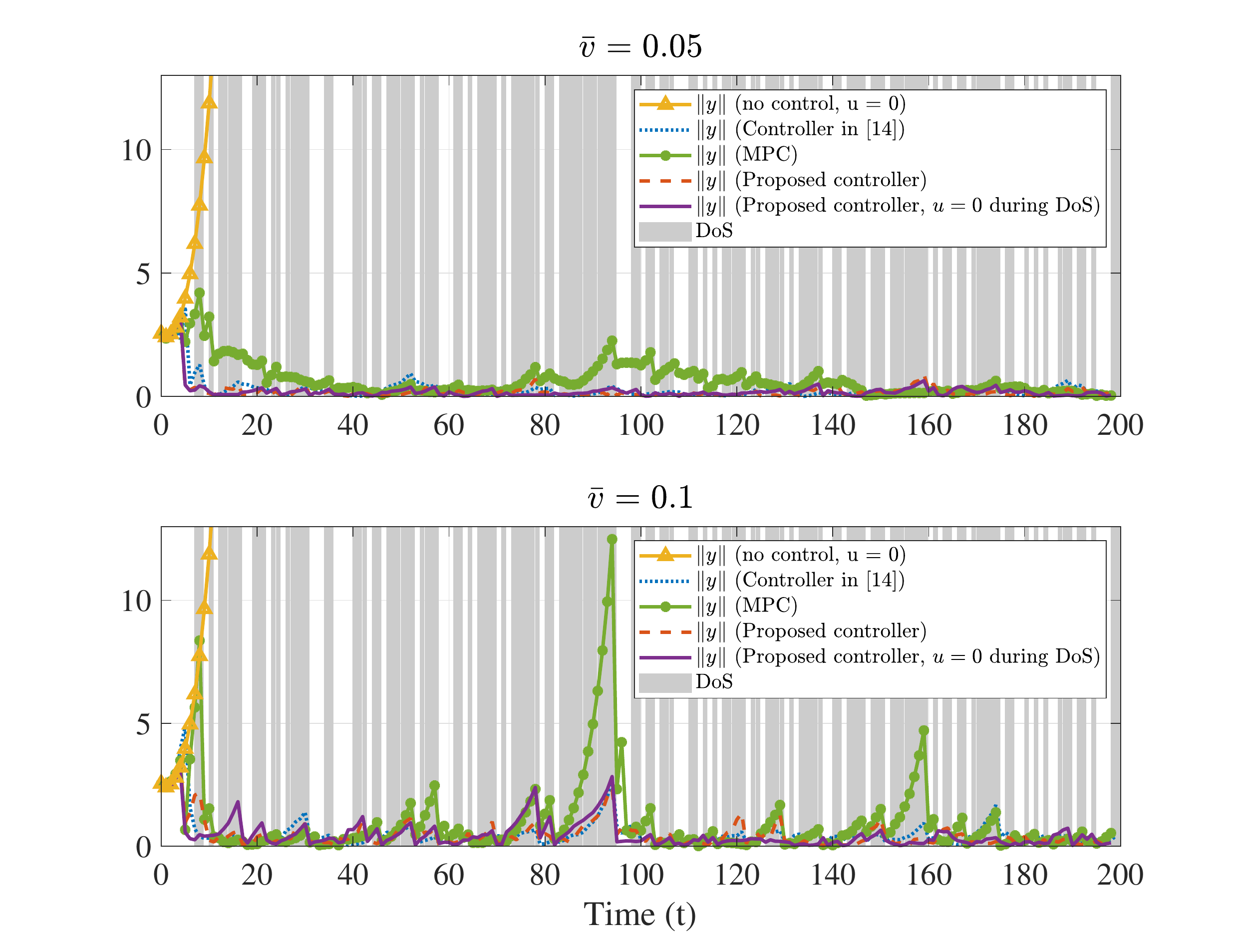}\\
		\caption{Norm of the closed-loop output $\Vert y\Vert$: $1/\nu_f + 1/\nu_d = 0.8841$.}\label{fig:y7}
		\centering
	\end{figure}
	\begin{figure}
		\centering
		\includegraphics[width=9cm]{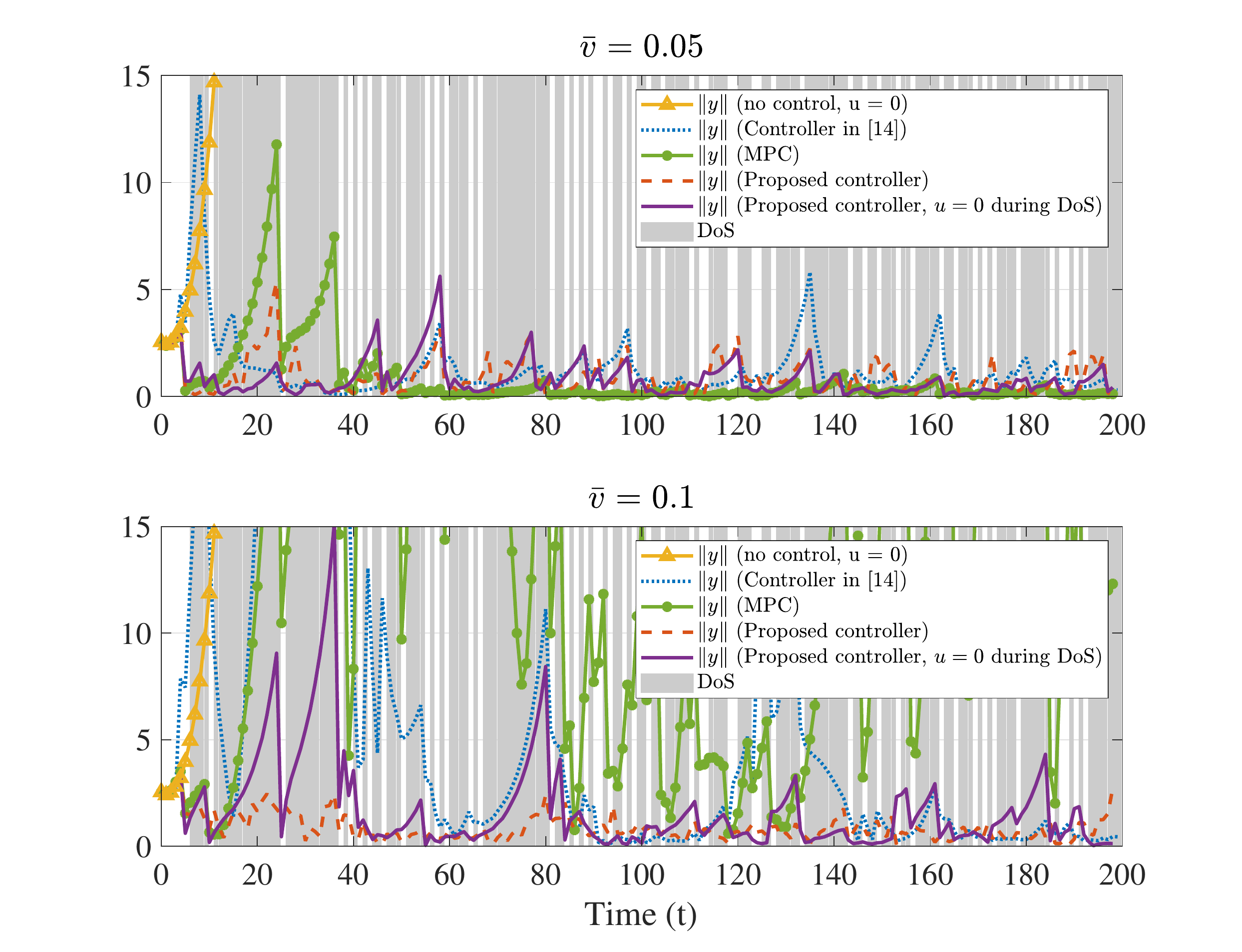}\\
		\caption{Norm of the closed-loop output $\Vert y\Vert$: $1/\nu_f + 1/\nu_d = 0.9142$.}\label{fig:y8}
		\centering
	\end{figure}
	\begin{figure}
		\centering
		\includegraphics[width=9cm]{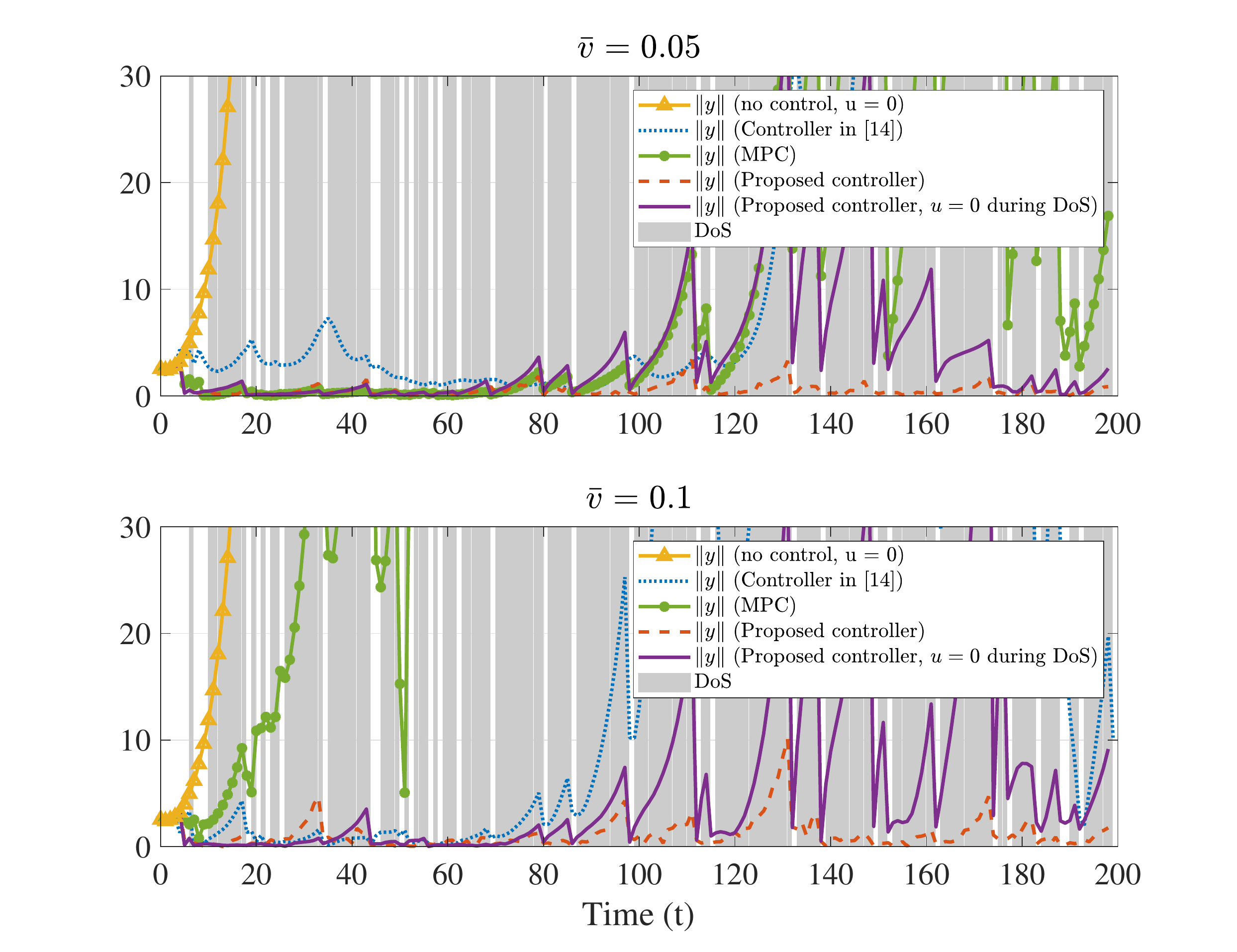}\\
		\caption{Norm of the closed-loop output $\Vert y\Vert$: $1/\nu_f + 1/\nu_d = 0.9317$.}\label{fig:y85}
		\centering
	\end{figure}
	\begin{figure}
		\centering
		\includegraphics[width=9cm]{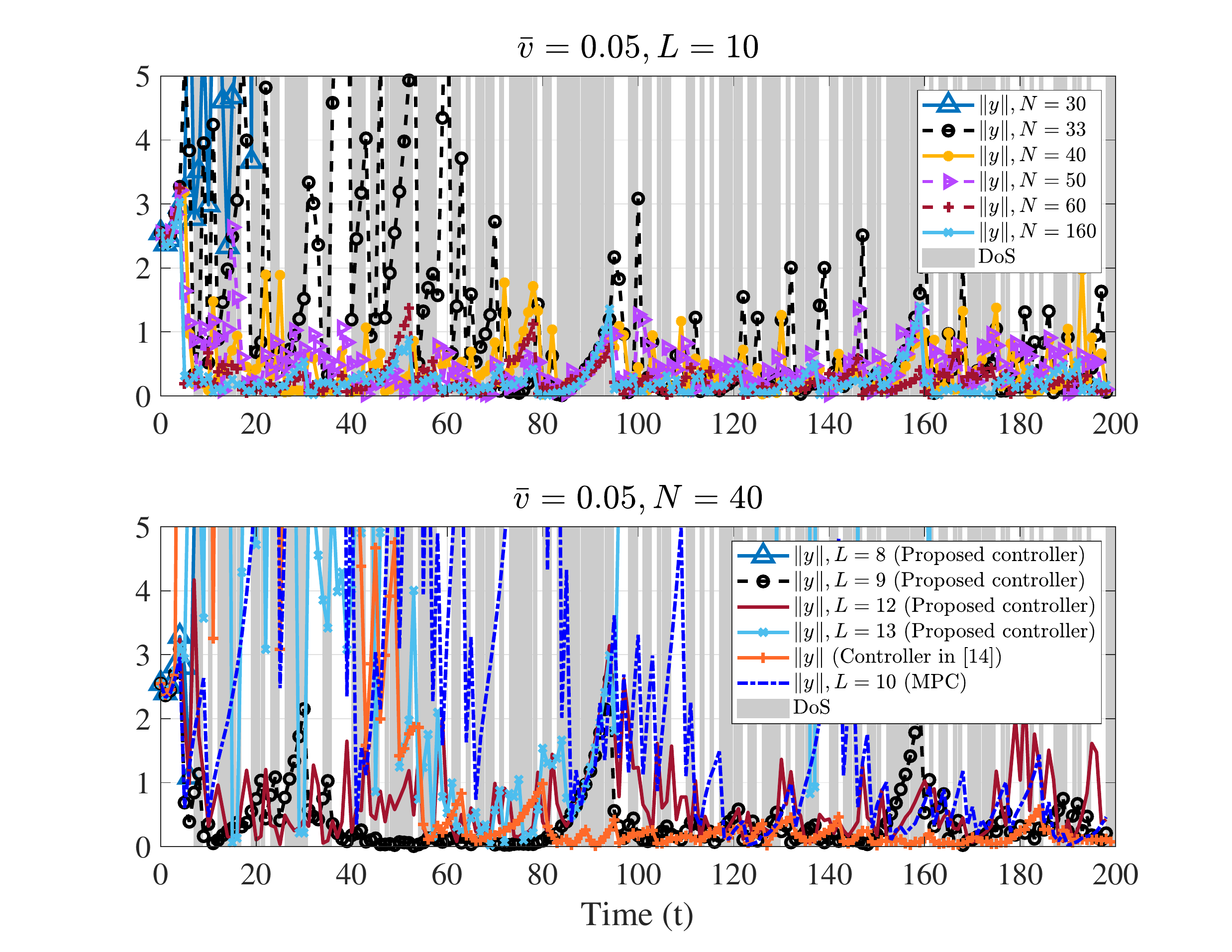}\\
		\caption{Relationship between system performance, $L$, and $N$: $1/\nu_f + 1/\nu_d = 0.8841$.}\label{fig:ynormNL}
		\centering
	\end{figure}
	
	\section{Conclusions}\label{sec:conclusion}
	This paper considered the stabilization problem of unknown stochastic LTI systems under DoS attacks. 
	A data-driven MPC scheme was developed, under which a sequence of inputs and associated predicted outputs can be computed purely from some pre-collected input-output data by solving convex programs. 
	On top of this scheme, a data-driven resilient controller was proposed such that the system can achieve local input-to-state stability under conditions on the DoS attacks and level of noise, even without any knowledge of system model or a priori system identification procedure.
	Moreover, the condition on DoS attacks is the same as the maximum resilience achieved in existing works under model-based controller.
	To achieve global stability, two modifications of the proposed controller were discussed at the price of sacrificing system resilience against DoS attacks, or increasing computational complexity. 
	Finally, a numerical example was provided to demonstrate the effectiveness of the proposed method as well as the practical merits of the established theory.
	\bibliographystyle{IEEEtran}
	
	\bibliography{bible3}

\end{document}